\documentclass[runningheads,envcountsame]{llncs}
\newcommand{\longversion}[1]{#1}
\newcommand{\shortversion}[1]{}
\usepackage{todonotes}

\usepackage[utf8]{inputenc}
\usepackage[T1]{fontenc}
\usepackage[USenglish]{babel}
\usepackage[intlimits]{amsmath}
\usepackage{amsfonts}                
\usepackage{amssymb}                 
\usepackage{amsmath}                 

\shortversion{\usepackage{apxproof}
} 
\longversion{\usepackage[appendix=inline]{apxproof}}
\shortversion{\usepackage{apxproof}}\usepackage{mathtools}
\usepackage{setspace}
\usepackage{graphicx}
\usepackage{hyperref}
\usepackage{cleveref}
\usepackage{xcolor}
\usepackage{xspace}
\usepackage{amsthm}

\crefname{corollary}{Corollary}{Corollary}
\crefname{lemma}{Lemma}{Lemma}
\crefname{claim}{Claim}{Claim}

\usepackage{caption}
\usepackage{subcaption}
\usepackage{tikz}
\usetikzlibrary{shapes.geometric}
\usepackage{orcidlink}

\newtheoremrep{thm}[theorem]{Theorem}
\newtheoremrep{lem}[lemma]{Lemma}
\newtheoremrep{obs}[observation]{Observation}
\newtheoremrep{cor}[corollary]{Corollary}

\usepackage{algorithm}
\usepackage[noend]{algpseudocode}

\usepackage[most]{tcolorbox}
\newcommand{\problemdef}[4]{\setlength\tabcolsep{2pt}
\begin{tcolorbox}[width = \columnwidth,colback=gray!5!white,colframe=gray!75!black,
    arc=0pt,outer arc=0pt,boxrule=1.5pt,left =0.5em,right=0em,enhanced,attach boxed title to top center={yshift=-3.8mm,yshifttext=-.9mm},
colbacktitle=gray!60,
  title=\textsc{#1} #2,
  boxed title style={size=small,colframe=gray!50!black}]
\mbox{}\\[-.5ex]
		\begin{tabular}{ @{\!\!\!}l p{0.89\columnwidth} c }
			\textsf{Input:} & #3 \\[.5pt]
			\textsf{Problem:} & #4
		\end{tabular}
	\vspace{-0.55em}
	\end{tcolorbox}
}

\newcommand{\NP}{\ensuremath{\mathsf{NP}}\xspace}
\newcommand{\FPT}{\ensuremath{\mathsf{FPT}}\xspace}
\newcommand{\XP}{\ensuremath{\mathsf{XP}}\xspace}

\newcommand{\OutputP}{\ensuremath{\mathsf{OutputP}}\xspace}

\newcommand{\DelayP}{\ensuremath{\mathsf{DelayP}}\xspace}
\newcommand{\DelayFPT}{\ensuremath{\mathsf{DelayFPT}}\xspace}

\newcommand{\OutputFPT}{\ensuremath{\mathsf{OutputFPT}}\xspace}

\newcommand{\W}[1]{\ensuremath{\mathsf{W}[#1]}\xspace}
\newcommand{\paraNP}{\ensuremath{\mathsf{para-NP}}\xspace}

\newcommand{\ptime}{\ensuremath{\mathsf{P}}\xspace}

\newcommand{\iffl}{if\longversion{ and only i}f }
\newcommand{\ExtGMDefAll}{\longversion{\textsc{Extension inclusion Minimal Defensive Alliance}}\shortversion{\textsc{ExtIMDefAll}}\xspace}

\newcommand{\no}{\textsf{no}\xspace}
\newcommand{\yes}{\textsf{yes}\xspace}

\newcommand{\ilp}{\longversion{integer linear program}\shortversion{ILP}\xspace}
\newcommand{\ilps}{\longversion{integer linear program}\shortversion{ILPs}\xspace}

\newcommand{\nd}{\ensuremath{\mathbf{nd}}}

\newcommand{\tw}{\ensuremath{\mathbf{tw}}}
\newcommand{\pw}{\ensuremath{\mathbf{pw}}}

\newcommand{\enum}[1]{\textsc{Enum}\,#1}
\newcommand{\AnIMDefAll}{{\normalfont\textsc{AnIMDefAll}}\xspace}
\newcommand{\MinMaxOut}{{\normalfont\textsc{MinMaxOut}}\xspace}
\newcommand{\twAnIMDefAll}{\ensuremath{\tw\text-\AnIMDefAll}\xspace}

\usepackage{lineno}
\shortversion{\linenumbers}

\newenvironment{pfclaim}{{\noindent\it Proof:}}{\hfill$\Diamond$\par}

\makeatletter
\newcommand{\crossout}[1]{\begingroup
  \sbox\z@{#1}\dimen\z@=\wd\z@
  \dimen\tw@=\ht\z@
  \dimen\z@=.99626\dimen\z@   \dimen\tw@=.99626\dimen\tw@ \edef\co@wd{\strip@pt\dimen\z@}\edef\co@ht{\strip@pt\dimen\tw@}\leavevmode
  \rlap{\pdfliteral{q 1 J 0.4 w 0 0 m \co@wd\space \co@ht\space l S Q}}\rlap{\pdfliteral{q 1 J 0.4 w 0 \co@ht\space m \co@wd\space 0 l S Q}}#1\endgroup
}
\makeatother

\newif\ifinappendix
\newcommand{\appref}[1]{\shortversion{{$(*)$}}}
\newcommand{\applabel}[1]{
\shortversion{{$(*)$}}
\label{#1}}

\longversion{}
\shortversion{}
\longversion{}
\shortversion{}
\longversion{}
\shortversion{}
\longversion{}
\shortversion{}
\longversion{}
\shortversion{}

\longversion{}
\shortversion{}
\longversion{}
\shortversion{}
\longversion{}
\shortversion{}
\longversion{}
\shortversion{}
\longversion{}
\shortversion{}

\title{Parameterized Enumeration of Minimal Defensive Alliances}
\titlerunning{Enumerating Defensive Alliances}
\author{Henning Fernau\inst1\orcidID{0000-0002-4444-3220} 
 \and Kevin Mann\inst1\orcidID{0000-0002-0880-2513}\and Arne Meier\inst2\orcidID{0000-0002-8061-5376}\and \\Heribert Vollmer \inst2\orcidID{0000-0002-9292-1960} }

\authorrunning{H. Fernau, K. Mann, A. Meier, H.Vollmer}
\institute{
 Universit\"at Trier, FB~4 -- Informatikwissenschaften, 54286 Trier, Germany.\\
\email{\{fernau,mann\}@uni-trier.de}\\
\and Institut für Theoretische Informatik, Leibniz Universität Hannover,\\
Hannover, Germany 
\email{\{meier,vollmer\}@thi.uni-hannover.de} 
}

\begin{document}
\maketitle

\begin{abstract}
	In this paper, we consider the complexity of enumerating inclusion minimal defensive alliances. We present a polynomial-delay algorithm on graphs with maximum degree 5.
	We complement this result by proving that there is no output-polynomial algorithm on bipartite graphs with maximum degree 6 and degeneracy 2, unless $\ptime = \NP$.
	Furthermore, there is an \FPT-delay algorithm when parameterized by neighborhood diversity. This result is the first exploit of a recently published enumeration algorithm for ILPs.
	By way of contrast, we prove that, for the parameter pathwidth, there is no \FPT-delay algorithm for enumerating all inclusion minimal defensive alliances (unless $\FPT = \W{1}$).
	T\longversion{o the best of our knowledge, t}his is the first non-enumerability result using parameterized complexity for variations of \textsc{Another/Next} problems.
\end{abstract}

\section{Introduction}
The concept of ``defensive alliances''---where individuals cooperate to protect themselves from predatory or competitive pressures---has been introduced more than 20 years ago~\cite{FriLHHH2003,Kimetal2005,KriHedHed2004,Sha2004,SzaCza2001} and has been extensively studied since then; cf. several survey papers~\cite{FerRod2014a,HayHedHen2021,OuaSliTar2018,YerRod2017}.
Let $G=(V,E)$ be an undirected graph, then
$\emptyset\subsetneq D\subseteq V$ is a \emph{defensive alliance} if, for each $v\in D$, $\deg_D(v)+1\geq \deg_{\overline{D}}(v)$, where $\deg_X(v)$ is the number of neighbors of $v$ in~$X\subseteq V$. Clearly, $\deg_{\overline{D}}(v)=\deg(v)-\deg_D(v)$, where $\deg(v)$ denotes the \emph{degree} of~$v$, i.e., the size of the open neighborhood $N(v)=\{u\in V\mid \{v,u\}\in E\}$.
A graph~$G$ has \emph{degeneracy} at most~$d$ if all the vertices of $G$ can be ordered as $v_1<v_2<\cdots<v_n$ such that $\deg_{\{v_j\mid i< j\leq n\}}(v_i)\leq d$ for all $i\in [n]$.
Notice that a singleton set forms a defensive alliance \iffl it comprises of a \emph{pendant vertex}, i.e., a vertex of degree~1.
Also, inclusion minimal defensive alliances ($D$ is a defensive alliance and each $U\subsetneq D$ is no defensive alliance) must be connected\longversion{ as each connected component of a defensive alliance is a defensive alliance}.
Moreover, due to Proposition~5 of~\cite{BazFerTuz2019}, there is a polynomial-time algorithm to determine whether a defensive alliance~$D$ is inclusion minimal or not.
This is not completely obvious, as the property of being a defensive alliance is not monotone, and this is why one has to differentiate between inclusion minimality (also called global minimality in the literature) and local minimality, where it is only required for a defensive alliance~$D$ that, for each $x\in D$, $D\setminus\{x\}$ is no defensive alliance.

To the best of our knowledge, the work of Feng~et~al.~\cite{FenFerMan2026} includes the only published results for enumerating inclusion minimal defensive alliances; some of the results were on input-sensitive enumeration (running time is bounded by the input size) and others on output-sensitive results (running time is bounded by both input and   output size); for example, there is an optimal algorithm that enumerates all inclusion minimal defensive alliances in time $\mathcal{O}^*\!\left( 2^n\right)$.
We restate two central results of~\cite{FenFerMan2026}\longversion{ for the ease of reference}.
\begin{theorem}[Theorem 4\longversion{ of \cite{FenFerMan2026}}] \label{thm:enum-forests-defall}
	All inclusion minimal defensive alliances of trees of order $n$ can be enumerated in time $\mathcal{O}^*\!\left( \sqrt{2}^n\right)$ with polynomial delay.
\end{theorem}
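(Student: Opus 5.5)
The plan has two parts: a structural description of minimal defensive alliances in trees, and an enumeration procedure over that description that both has polynomial delay and respects the count bound.

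\textbf{Structure.} Let $T$ be a tree and $D$ an inclusion minimal defensive alliance. Then $D$ is connected, so $T[D]$ is a subtree. If $D$ contains a pendant vertex $v$, then $\{v\}$ is already an alliance, so $D=\{v\}$. Otherwise every $v\in D$ needs $\deg_D(v)\ge\lceil(\deg(v)-1)/2\rceil$. The natural candidates are therefore subtrees in which each vertex keeps roughly half of its neighbours. Minimality must still be checked globally, since the property is not monotone; Proposition~5 of~\cite{BazFerTuz2019} supplies this check in polynomial time. For trees we expect something stronger: a connected alliance is minimal if and only if no proper subtree obtained by cutting at a vertex that has ``slack'' is again an alliance. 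This localisation needs proof.

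\textbf{Enumeration.} We root $T$ and run a branching (flashlight) search. A partial solution is a connected set grown from a chosen root vertex $r$, which we take to be the vertex of $D$ closest to the root of $T$, so that each $D$ is generated exactly once. Each vertex $v$ added to $D$ must receive a number of children $c_v$ with
\[
  c_v \ge \left\lceil \tfrac{\deg(v)-1}{2} \right\rceil - [\text{parent of } v \in D].
\]
In a minimal solution we expect this bound to hold with equality, since an extra child could be removed. This gives a branching over subsets of children of the required size. Before entering a branch, the extension question ``does some minimal alliance extend this partial choice?'' is decided in polynomial time by a bottom-up dynamic program on the tree. The DP computes, for each vertex, whether a minimal completion exists under the constraints fixed so far. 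Only branches with a positive answer are followed, and each output is checked with the minimality test. This yields polynomial delay, because every explored branch leads to an output within $\mathrm{poly}(n)$ steps.

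\textbf{Counting.} To get the $\sqrt2^n$ bound, we charge each minimal alliance to the vertices it excludes. Since each included vertex chooses about half of its children, the solutions correspond injectively to choices of roughly $n/2$ binary decisions. Concretely, we pair each included vertex with an excluded neighbour and argue that the number of minimal alliances is at most $\prod_v \binom{d_v}{\lceil d_v/2\rceil}$ restricted to consistent choices. We then bound this product by $2^{n/2}$ up to polynomial factors, using $\sum d_v\le n$ together with $\binom{d}{\lceil d/2\rceil}\le 2^{d}/\sqrt{d}$. This is weaker than needed, so a finer measure is likely required: one counts leaves, which are forced out of any non-singleton solution, and shows that the tree minus leaves contributes at most $2^{n/2}$. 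The total running time is then the number of solutions times the polynomial delay.

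\textbf{Main obstacle.} I expect the hardest step to be the polynomial-time extension DP that certifies that a branch contains a \emph{minimal} alliance. The difficulty is that minimality is a global condition and is not monotone. The approach I would try first is to prove that in trees minimality is equivalent to the local condition that every non-root vertex has exactly the required number of children in $D$, combined with a condition on the root. With that equivalence, the DP is a straightforward bottom-up computation over the tree.
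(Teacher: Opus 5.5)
The paper does not prove this statement; it is Theorem~4 of \cite{FenFerMan2026}, quoted only for reference. Your proposal therefore has to stand on its own.

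\textbf{Structure and delay.} These parts are fine in outline, and what you call the main obstacle is short. In a tree, a set $D$ with $|D|\ge 2$ is an inclusion minimal defensive alliance if and only if $T[D]$ is connected and $\deg_D(v)=\lceil(\deg(v)-1)/2\rceil$ for every $v\in D$.
\begin{itemize}
\item If some $v\in D$ had a surplus neighbour $u\in D$, delete the component of $T[D]-v$ containing $u$. What remains is a smaller alliance, because no remaining vertex except $v$ is adjacent to the deleted part.
\item Conversely, a proper connected sub-alliance $D'\subsetneq D$ contains a vertex adjacent to $D\setminus D'$, and that vertex falls below the threshold.
\end{itemize}
With this characterization you need neither a minimality test nor a DP over ``minimal completions''. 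One bottom-up pass suffices: mark $u$ viable when $\deg(u)\ge 2$ and $u$ has at least $\lceil(\deg(u)-1)/2\rceil-1$ viable children. Branching only over viable children then gives polynomial delay. As written, however, you only assert this DP without constructing it.

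\textbf{The real gap: the time bound.} Under polynomial delay, total time is $\mathrm{poly}(n)$ times the number of solutions. So $\mathcal{O}^*(\sqrt{2}^{n})$ amounts to showing that a tree has $\mathcal{O}^*(\sqrt{2}^{n})$ minimal defensive alliances, and your counting paragraph does not do this.
\begin{itemize}
\item The product of binomials gives only about $2^n$, as you note yourself.
\item The claim that ``the tree minus leaves contributes at most $2^{n/2}$'' is unsupported. With your estimate $\binom{d}{\lceil d/2\rceil}\le 2^d/\sqrt d$, even the non-leaf part of a path is charged a factor~$2$ per inner vertex.
\end{itemize}

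\textbf{How to close it.} Two ingredients are missing. First, branch only over non-leaf children, since leaves of $T$ never lie in a non-singleton solution. Second, use the sharper bound $\binom{c}{k}\le 2^{c-1}$ for $c\ge 1$.

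Root $T$. Let $c_v$ be the number of children of $v$, $c'_v$ the number of its non-leaf children, and $\ell$ the number of childless vertices. A direct induction over the branching recursion bounds the number of solutions with a fixed topmost vertex by $2^{\sum_v\max(c'_v-1,0)}$. The exponent is at most $\sum_{v:\,c_v\ge1}(c_v-1)=\ell-1$, and also at most $\sum_v c'_v\le n-\ell$. Hence it is at most $n/2$. Summing over the $n$ choices of topmost vertex gives $\mathcal{O}(n\,2^{n/2})$.

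Equivalently, let $f(v)$ be the number of completions below a non-topmost vertex $v$, and prove $f(v)\le 2^{(|T_v|-2)/2}$ by induction. The spider with $c$ legs of length two has $\binom{c}{\lceil(c-1)/2\rceil}$ solutions, so the bound is tight up to polynomial factors.
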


\begin{theorem}[Theorems 5 and 6\longversion{ of \cite{FenFerMan2026}}]\label{thm:outpolyhard_bipartite}
	If there was an output-polynomial time algorithm for enumerating inclusion minimal
	defensive alliances of bipartite or split graphs, then there would be an output-polynomial time algorithm for enumerating minimal dominating sets in general connected graphs.
\end{theorem}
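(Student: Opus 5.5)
The plan is a polynomial-time, output-preserving reduction from enumerating minimal dominating sets of a connected graph $H=(V_H,E_H)$ to enumerating inclusion minimal defensive alliances of a bipartite graph $G$ (and, separately, of a split graph $G'$). The reduction should have three properties. First, every minimal dominating set $S$ of $H$ corresponds to a unique inclusion minimal defensive alliance $D_S$ of $G$, and $S$ can be recovered from $D_S$ in polynomial time. Second, the remaining minimal defensive alliances of $G$ number at most polynomially many in $|V_H|$ and are easy to recognize. Third, $G$ has size polynomial in $|V_H|$. With these, an output-polynomial algorithm for $G$ yields one for $H$: we run it, discard the polynomially many ``junk'' alliances, and decode the rest. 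The total time stays polynomial in $|V_H|$ plus the number of minimal dominating sets.

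For the bipartite construction, I would take two copies $V_H$ (the ``choice'' side) and $V_H'$ (the ``check'' side), with $u$ adjacent to $v'$ whenever $v\in N_H[u]$. I would then attach gadgets (pendant-like trees, or blocks of extra neighbours) that tune degrees. The tuning has two goals. A check vertex $v'$ should satisfy the alliance inequality $\deg_D(v')+1\ge\deg_{\overline D}(v')$ exactly when at least one of its closed-neighbourhood choice vertices is in $D$; this is the domination condition. A choice vertex $u\in D$ should be defended only by having all of $V_H'$, or a fixed ``hub'' structure, inside $D$. Concretely, I would add a central gadget that forces any alliance containing one check vertex to contain all check vertices. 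This is where connectivity of $H$ and the remark that minimal alliances are connected get used. Under this forcing, $D\cap V_H$ must be a dominating set, and inclusion minimality of $D$ corresponds to minimality of the dominating set. Small alliances that avoid the main structure, such as pendant vertices that are themselves singleton alliances, must be avoided or counted. I would avoid them by giving gadget leaves degree at least two via bipartite $K_{2,t}$-type padding.

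For split graphs, the same idea applies with $V_H$ as an independent set and the check side turned into a clique. The degree counts are then recomputed, because clique edges contribute $\deg_D$ heavily. I would compensate by attaching to each clique vertex a number of private independent neighbours proportional to the clique size.

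The main obstacle is the minimality correspondence. Because the defensive-alliance property is not monotone, a proper subset of $D_S$ could be an alliance that uses only part of the gadget. It is also possible that a minimal alliance exists which contains a non-minimal dominating set but omits some gadget vertices. I would handle this with a careful case analysis on which gadget vertices a minimal alliance $D$ contains. I would show that containing any vertex forces the whole hub via a degree-counting argument. Then, if $D\cap V_H\supsetneq S'$ for some dominating set $S'$, replacing the choice part by $S'$ gives a smaller alliance. This is where the exact degree parameters have to be pinned down.
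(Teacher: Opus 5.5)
You get no comparison with the paper's proof here, because the paper only quotes this result from \cite{FenFerMan2026} and does not prove it; your plan is judged on its own. The architecture is viable: a closed-neighbourhood incidence graph between a choice copy and a check copy of $V_H$, a hub forced as a block, degree tuning so that check vertices encode domination, and polynomially many discarded junk outputs. There is a genuine gap, though. The forcing gadget, which is the whole content of the proof, is never specified, and the one concrete design decision you make would break it.

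\textbf{The $K_{2,t}$ padding forces nothing.} A vertex of degree $d$ is defended as soon as $\lceil (d-1)/2\rceil$ of its neighbours lie in $D$. So a degree-$2$ padding vertex adjacent to $a$ is defended whenever $a\in D$. It can therefore be put into $D$ in place of the structural neighbours of $a$ that you meant to force. For instance, let $a$ have $m$ structural neighbours and $t\ge m+1$ such padding neighbours. Then $\{a\}\cup T$, with $|T|=\lceil (m+t-1)/2\rceil$ padding vertices, is already a minimal alliance that bypasses the hub. Mixtures of padding and structural neighbours give further spurious alliances whose number you no longer control.

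The right device is exactly the pendant vertex you wanted to avoid. Each pendant is a singleton minimal alliance; there are only polynomially many, and they are trivially recognised and discarded. No other minimal alliance can contain a pendant. Hence a vertex with $m$ non-pendant and $m+1$ pendant neighbours lies in a non-singleton minimal alliance only together with all $m$ non-pendant neighbours. The same pendant-forcing trick is used for the vertices $u_r$ in the proof of Theorem~\ref{thm:another_def_all_max_deg_6}.

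\textbf{Where the forcing must come from.} The condition at a check vertex $v'$ is a single threshold $\deg_D(v')\ge\tau$. For $k_v:=|N_H[v]|\ge 2$ it therefore cannot mean ``all hub neighbours and at least one choice neighbour'': with $a$ hub neighbours you would need $a+1\ge\tau>a-1+k_v$. So completeness of the hub has to be enforced from the hub side. The check vertex $v'$ only needs to be undefendable when no hub neighbour is in $D$, which requires at least $k_v$ hub neighbours.

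\textbf{A construction that works.} Let $M=\max_v k_v$. Add $h_1,\dots,h_M$, each adjacent to all of $V_H'$, and a connector $g$ adjacent to all $h_j$. Attach $n+2$ pendants to each $h_j$, $M+1$ to $g$, and $M+2-k_v$ to each $v'$. Then:
\begin{itemize}
\item $\deg(v')=2M+2$, so $v'$ needs $M+1$ neighbours in $D$. This is impossible from its $k_v\le M$ choice neighbours alone, and once all hubs are present it is equivalent to $N_H[v]\cap D\neq\emptyset$.
\item Choice vertices are pairwise non-adjacent and have degree $|N_H[u]|\ge 2$. This is where connectivity of $H$ with $n\ge 2$ is really used.
\item Hence every non-singleton minimal alliance contains some $h_j$ or $g$, and therefore $g$, all $h_j$ and all of $V_H'$.
\end{itemize}
The minimal alliances of this bipartite graph are exactly the pendant singletons and the sets $S\cup V_H'\cup\{g,h_1,\dots,h_M\}$ with $S$ a minimal dominating set of $H$. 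For split graphs the analogue is the clique $V_H'\cup R$ with $|R|=M$, carrying $n+M$ pendants at each $r\in R$ and $n+M+1-k_v$ pendants at each $v'$, with the choice vertices forming the independent side.
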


In this paper, we will strengthen these two results. \longversion{Clearly, trees have a degeneracy of $1$ and forests are exactly the graphs with degeneracy~$1$. }By \autoref{thm:enum-forests-defall}, we have a polynomial-delay enumeration algorithm for inclusion minimal defensive alliances on graphs of degeneracy~$1$.
We will prove that there is no polynomial-delay enumeration algorithm on bipartite graphs of degeneracy~$2$ and degree \longversion{at most~$6$}\shortversion{$\le 6$} unless $\ptime=\NP$ (see \autoref{thm:no_polydelay_max_degree_6}).
This result is complemented by a polynomial-delay enumeration algorithm on graphs of degree at most~5 (see \autoref{thm:poly-delay-for-maxdegree-5}). We also commence a parameterized analysis of this enumeration problem, providing an \FPT-delay enumeration algorithm (with the parameter neighborhood diversity; see \autoref{thm:FPT-delay_nd}) and ruling out even an \OutputFPT algorithm with the parameter pathwidth in \autoref{thm:no_pw_FPT_enum}.

\section{Enumeration}

Throughout the paper, we use mathematical  standard notation. In particular, $\mathbb{N}$ ($\mathbb{N}^+$, resp.) is the set of non-negative (or positive) integers and for $k\in\mathbb{N}^+$, $[k]=\{i\in\mathbb{N}^+\mid i\leq k\}$. We also assume basic (parameterized) complexity knowledge on the side of the reader.

In this section, we provide the basic notions of algorithmics for enumeration.
Let $\Sigma$ be an alphabet and $A \subseteq \Sigma^* \times \Sigma^*$. For $x\in \Sigma^*$, $A(x)=\{\,y\in \Sigma^* \mid (x,y) \in A\,\}$. In the following, we assume that $A$ is \emph{polynomially balanced}, i.e., there is a polynomial $q$ such that, for all $(x,y)\in A$, $|y|\leq q(|x|)$.
For the enumeration problem $\enum A$, an $x\in \Sigma^*$ is given and the task is to output all elements in $A(x)$ without duplicates.

An enumeration algorithm $\mathcal{A}$ runs in \emph{output-polynomial time} if there is a polynomial $p\colon \mathbb{N} \times \mathbb{N} \to \mathbb{N}$  such that, for all $x\in \Sigma^*$, the algorithm $\mathcal{A}$ outputs all elements from $A(x)$ in at most $p(\vert x\vert, \vert A(x) \vert )$ steps.
The class $\OutputP$ is the set of all enumeration problems for which there is an output-polynomial algorithm.

An enumeration algorithm $\mathcal{A}$ runs with \emph{polynomial delay} if there is a polynomial $p\colon \mathbb{N} \to \mathbb{N}$  such that, for all $x\in \Sigma^*$, the algorithm $\mathcal{A}$ outputs all elements from $A(x)$ and the time between any two consecutive outputs---from the start until the first output as well as from the last output until the algorithm stops---is bounded by $p(\vert x\vert)$.
The class $\DelayP$ consists of all enumeration problems for which there is a polynomial delay algorithm.
By definition, we have\longversion{ the following inclusion landscape}: $\DelayP \subseteq \OutputP$.

For some enumeration problems, the polynomial features just defined seem to be out of reach. In analogy to the case of decision problems, the idea of parameterized algorithms was developed~\cite{CreMMSV2017,Mei2020}.
To this end, let $\kappa\colon\Sigma^*\to\mathbb{N}$ be a \emph{parameterization function}.\longversion{\footnote{In~\cite{CreMMSV2017}, it was required that $\kappa$ is polynomial-time computable. As this requirement is not existent with decision problems and will not be satisfied for some parameterizations that we consider in this paper, we do not put this condition here.}}
Then, we say that an enumeration algorithm $\mathcal{A}$ runs in \emph{\OutputFPT time} (with respect to the parameterization~$\kappa$) if there is a polynomial $p\colon \mathbb{N} \times \mathbb{N} \to \mathbb{N}$ and some computable function $f\colon\mathbb{N}\to \mathbb{N}$ such that, for all $x\in \Sigma^*$, the algorithm $\mathcal{A}$ outputs all elements from $A(x)$ in at most $f(\kappa(x))\cdot p(\vert x\vert, \vert A(x) \vert )$ steps. $\OutputFPT$ is the corresponding class of parameterized enumeration problems. Analogously, one can define the class $\DelayFPT$.

\section{Enumeration with Maximum Degree at most 5}

The main goal of this section is to prove that inclusion minimal defensive alliances can be enumerated with polynomial delay on graphs with maximum degree~5. To achieve this, we will first prove a connection to path and cycle enumeration. In the algorithm we will then use results of Uno and Satoh~\cite{UnoSat2014}.

\begin{lemrep}\label{lem:max_deg_5_path__min_def_all}
	Let $G=(V,E)$ be a graph with $\Delta(G)\leq 5$ and $p_1, \ldots, p_{\ell}$ be an induced path with $\ell\geq 2$.  $P=\{p_1, \ldots, p_{\ell}\}\subseteq V$ is an inclusion minimal defensive alliance \iffl $\deg(p_1),\deg(p_{\ell})\in \{2,3\}$ and, for\longversion{ all} $i\in \{2,\ldots,  \ell-1\}$, $\deg(p_i)\in \{4,5\}$.
\end{lemrep}

\begin{proof}
	First we consider $\deg(p_1),\deg(p_{\ell})\in \{2,3\}$ and, for all $i\in \{2,\ldots,  \ell-1\}$, $\deg(p_i)\in \{4,5\}$.
	For $i\in \{1,\ell\}$, $\deg_{P}(p_i)+1 \geq 2\geq \deg_{\overline{P}}(p_i)$. Furthermore, $\deg_{P}(p_i)+1 \geq 3\geq \deg_{\overline{P}}(p_i)$ for $i\in \{2\ldots, \ell-1\}$. Thus $P$ is a defensive alliance. \longversion{
	}
	Let $\emptyset \subsetneq Q \subsetneq P$. It is easy to see that $Q$ is no defensive alliance, as there is an $i\in \{1,\ell\}$ with $\deg_Q(p_i) + 1=1 <2 \leq \deg_{\overline{Q}}(p_i)$ or, there is an $i\in \{2,\ldots,\ell-1\}$ with $\deg_Q(p_i) + 1\leq 2 <3 \leq \deg_{\overline{Q}}(p_i)$.  Hence, $Q$ is no defensive alliance and $P$ is inclusion minimal.\longversion{
	}    Conversely, assume $P$ is an inclusion minimal defensive alliance, but $\deg(p_1),\deg(p_{\ell})\in \{2,3\}$ and for all $i\in \{2,\ldots,  \ell-1\}$, $\deg(p_i)\in \{4,5\}$ does not hold. If there is a vertex in $P$ of degree one, $P$ is not minimal as $\ell\geq 2$. Therefore, we can assume $\deg(v)>1$ for all $v\in P$. If there is an $i\in \{1,\ell\}$ with $\deg(v)>3$, $\deg_P(v) + 1 = 2 < 3 \leq \deg_{\overline{P}}(v)$. Hence, there is an $i\in \{2,\ldots, \ell - 1 \}$ with $\deg(p_i)\in \{2,3\}$. In this case, $\{p_1,\ldots, p_i\}$ is a defensive alliance. \qed
\end{proof}

\begin{lemrep}\applabel{lem:max_deg_5_cycle__min_def_all}
	Let $G=(V,E)$ be a graph with $\Delta(G)\leq 5$ and $p_1, \ldots, p_{\ell}$ be an induced cycle.  $P=\{p_1, \ldots, p_{\ell}\}\subseteq V$ is an inclusion minimal defensive alliance \iffl $\vert \{i\in [\ell]\mid \deg(p_i)<4\} \vert\leq 1$.
\end{lemrep}

\begin{proof}
	First we assume that the degree condition $\vert \{i\in [\ell]\mid \deg(p_i)<4\} \vert\leq 1$ holds.
	For $i\in [\ell]$, $\deg_{P}(p_i)+1 \geq 3\geq \deg_{\overline{P}}(p_i)$. Thus, $P$ is a defensive alliance.
	Let $\emptyset \subsetneq Q \subsetneq P$. By the degree condition, no singleton set~$Q$ can be a defensive alliance. Otherwise, if $Q$ were an inclusion minimal defensive alliance, $Q$ would form an induced path, so that the  degree condition contradicts \autoref{lem:max_deg_5_path__min_def_all}.

	Conversely, let $P$ be an inclusion minimal defensive alliance. Since $P$ is a cycle, it contains no pendant vertex. If there are $i,j\in [\ell]$ with $i<j$ and $\deg(p_i),\deg(p_j)\in \{2,3\}$, but $\deg(p_k)\in\{4,5\}$ for $i<k<j$, then by \autoref{lem:max_deg_5_path__min_def_all}, $\{p_i,\ldots,p_j\}$ would be a defensive alliance, contradicting minimality.\qed
\end{proof}

\begin{lemma}\label{lem:max_deg_5_min_def_all}
	Let $G=(V,E)$ be a graph with $\Delta(G)\leq 5$ and $A\subseteq V$. $A$ is an inclusion minimal defensive alliance \iffl one of the following constraints hold:
	\begin{enumerate}
		\item $A=\{v\}$ with $\deg(v)=1$.
		\item $\vert A\vert>1$ and $A$ forms an induced path $p_1,\ldots, p_{\ell}$ with $\deg(p_1),\deg(p_{\ell})\in \{2,3\}$ and for all $i\in \{2,\ldots,  \ell-1\}$, $\deg(p_i)\in \{4,5\}$.
		\item $A$ forms an induced cycle $p_1,\ldots, p_{\ell}$ with $\vert \{i\in [\ell]\mid \deg(p_i)<4\} \vert\leq 1$.
	\end{enumerate}
\end{lemma}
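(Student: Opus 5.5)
The "if" direction is immediate from the two previous lemmas. Case~1 is the observation that a singleton is a defensive alliance exactly when it is a pendant vertex; it is then trivially inclusion minimal. Case~2 is \autoref{lem:max_deg_5_path__min_def_all} and Case~3 is \autoref{lem:max_deg_5_cycle__min_def_all}. So all the work lies in the ``only if'' direction. There I will show that every inclusion minimal defensive alliance $A$ with $|A|>1$ induces either a path or a cycle. Once that is established, the two lemmas again give the stated degree conditions. For $|A|=1$ the pendant-vertex observation yields Case~1.

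For the structural claim I use the degree bound. Let $A$ be an inclusion minimal defensive alliance with $|A|\ge 2$; it is connected, as noted in the introduction. First, no vertex of $A$ has degree~$1$ in $G$, since otherwise that vertex alone would be a smaller defensive alliance. Next, every $v\in A$ satisfies $\deg_A(v)\ge \lceil(\deg(v)-1)/2\rceil$. Since $\deg(v)\le 5$, the requirement is $\deg_A(v)\ge 1$ for $\deg(v)\in\{2,3\}$ and $\deg_A(v)\ge 2$ for $\deg(v)\in\{4,5\}$.

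The key step is to show that if $G[A]$ is not a path or a cycle, then $A$ properly contains an induced path or cycle satisfying the conditions of \autoref{lem:max_deg_5_path__min_def_all} or \autoref{lem:max_deg_5_cycle__min_def_all}. That subset is a defensive alliance, which contradicts minimality. Let $L=\{v\in A\mid \deg(v)\le 3\}$ be the set of ``light'' vertices; every other vertex of $A$ has $\deg_A\ge 2$. I split into two cases.

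\emph{Case $|L|\ge 2$.} Take two light vertices $a,b\in A$ and a shortest path between them inside $G[A]$. Among all such pairs, choose one that has no light interior vertex; if the path between $a$ and $b$ has a light interior vertex, just shorten to the first such vertex. A shortest path is induced, its endpoints have degree in $\{2,3\}$, and its interior vertices have degree in $\{4,5\}$. By \autoref{lem:max_deg_5_path__min_def_all} it is a defensive alliance, so by minimality it equals $A$.

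\emph{Case $|L|\le 1$.} All vertices of $A$ except possibly one have $\deg_A\ge 2$, so $G[A]$ contains a cycle. Walking greedily from the light vertex (or from any vertex), always moving to an unvisited $A$-neighbour, produces a cycle. Taking a shortest cycle, or a chordless one through the relevant part, gives an induced cycle $C\subseteq A$ with at most one light vertex. By \autoref{lem:max_deg_5_cycle__min_def_all}, $C$ is a defensive alliance, hence $C=A$.

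The main obstacle is the second case. When $|L|=1$, the single light vertex $u$ may have $\deg_A(u)=1$, so I must ensure that a chordless cycle still exists in $G[A]$. My plan is to delete $u$, observe that every remaining vertex still has degree at least $1$ in $G[A\setminus\{u\}]$, and then use a stronger argument: if $G[A]$ were a tree, its leaves would all be light. A tree with at least two vertices has at least two leaves, so a tree would force $|L|\ge 2$. Hence $G[A]$ contains a cycle, and a shortest cycle is induced, which completes the case.
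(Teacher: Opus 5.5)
Your proof is correct, but it splits the ``only if'' direction along a different axis than the paper does.

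\textbf{The paper's route.} It splits on the shape of $G[A]$. When $\Delta(G)\le 5$, every vertex of a cycle has two neighbours on it and at most three outside, so every induced cycle is a defensive alliance regardless of degrees. Hence if $G[A]$ contains a cycle, $A$ is an induced cycle, and the converse of \autoref{lem:max_deg_5_cycle__min_def_all} gives the degree condition. Otherwise $G[A]$ is a tree: its leaves are light, a subtree argument shows its internal vertices are heavy, and the leaf-to-leaf path is an alliance by \autoref{lem:max_deg_5_path__min_def_all}.

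\textbf{Your route.} You split on the number of light vertices instead. With at least two, a shortest light-to-light path in $G[A]$, truncated at its first light interior vertex, is directly an induced path of the required type. With at most one, the two-leaves argument rules out a tree, and a shortest cycle is induced with at most one light vertex.

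\textbf{Comparison.} You use only the sufficiency halves of the two preceding lemmas. The degree conditions on $A$ come out of your construction itself, so the converse directions you mention in your first paragraph are not actually needed. Your truncation trick also replaces the paper's somewhat loosely worded subtree argument about internal vertices. The paper's route, in turn, gets the cyclic case almost for free from the degree-independent observation about cycles.

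\textbf{Minor points.} You can drop the greedy-walk sentence and the plan to delete $u$. Started at a heavy vertex when $|L|=1$, the walk may simply get stuck at $u$; your tree-leaf argument is what actually settles that case. Also, like the paper, you tacitly assume $G$ has no isolated vertices in the case $|A|=1$: an isolated vertex forms a singleton inclusion minimal defensive alliance that none of the three cases covers.
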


\begin{proof}
	The if-part follows by \autoref{lem:max_deg_5_path__min_def_all} and \autoref{lem:max_deg_5_cycle__min_def_all}.

	Conversely, let $A \subseteq V$ be an inclusion minimal defensive alliance. Assume $A$ contains an induced cycle $C \subsetneq A$. By \autoref{lem:max_deg_5_cycle__min_def_all}, $C$ is a defensive alliance. Thus by minimality, $G[A]$ is a tree. If $\vert A\vert = 1$, the element is pendant in~$G$. Now assume $\vert A\vert>1$. Consider some internal vertex~$x$ of the tree~$A$. If $\deg(x)\in\{2,3\}$, then we could see that any subtree of~$A$ with $x$ as a leaf would be a defensive alliance, contradicting minimality. Hence, all internal vertices must have degrees~4 or~5.  Now, let $u,v\in A$ be pendant vertices in $A$. Observe that $\deg(u),\deg(v)\in \{2,3\}$, as otherwise $\deg_A(u)<\deg_{\overline{A}}(u)$ or $\deg_A(v)<\deg_{\overline{A}}(v)$. By \autoref{lem:max_deg_5_path__min_def_all}, the unique $u$-$v$-path $P$ in $G[A]$ is a defensive alliance of~$G$. This proves the only-if-part.\qed
\end{proof}

\noindent
\autoref{lem:max_deg_5_min_def_all} is the core idea for the following result (using the algorithms of~\cite{UnoSat2014}).
\begin{theorem}\label{thm:poly-delay-for-maxdegree-5}
	There is a polynomial delay algorithm for enumerating all inclusion minimal defensive alliances for graphs $G$ with $\Delta(G)\leq 5$.
\end{theorem}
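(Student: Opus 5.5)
By \autoref{lem:max_deg_5_min_def_all}, the inclusion minimal defensive alliances of a graph $G$ with $\Delta(G)\le 5$ are exactly the sets of three types: singletons of pendant vertices, vertex sets of induced paths with prescribed endpoint and interior degrees, and vertex sets of induced cycles with at most one vertex of degree below~$4$. The plan is to enumerate these three families one after another. They are pairwise disjoint as families of sets: type~1 sets have size one, and a set of size at least two cannot be both an induced path and an induced cycle, since $G[A]$ is either a path or a cycle. So concatenating three polynomial-delay enumerations produces no duplicates. Type~1 is trivial: list the pendant vertices in linear time.

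For type~2, let $V_{\mathrm{in}}=\{v\mid \deg(v)\in\{4,5\}\}$ and $V_{\mathrm{end}}=\{v\mid \deg(v)\in\{2,3\}\}$. Each type~2 alliance is an induced path $p_1,\dots,p_\ell$ in $G$ with $p_1,p_\ell\in V_{\mathrm{end}}$ and all interior vertices in $V_{\mathrm{in}}$. I would reduce this to enumerating induced paths between fixed terminals, using the algorithm of Uno and Satoh~\cite{UnoSat2014}. For every unordered pair $\{s,t\}\subseteq V_{\mathrm{end}}$, build $H_{s,t}=G[V_{\mathrm{in}}\cup\{s,t\}]$ and enumerate all induced $s$-$t$ paths of $H_{s,t}$.

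Two checks are needed here.
\begin{itemize}
\item \emph{Inducedness carries over.} A path is induced in $H_{s,t}$ if and only if it is induced in $G$, because $H_{s,t}$ is an induced subgraph of $G$.
\item \emph{Each set appears once.} Every such path has $s$ and $t$ as its only vertices outside $V_{\mathrm{in}}$, hence these as its endpoints. So each set is produced for exactly one pair and once per pair, provided we fix the order of $s$ and $t$.
\end{itemize}
The adjacent case $\ell=2$ is included automatically.

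To keep the delay polynomial across the $O(n^2)$ pairs, some pairs may yield no path at all. I would therefore first test in polynomial time whether $s$ and $t$ are connected in $H_{s,t}$. A shortest $s$-$t$ path is always induced, so connectivity is equivalent to the existence of an output, and empty pairs can be skipped. Since only polynomially many pairs are tested between consecutive outputs, the delay stays polynomial.

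For type~3, we need induced cycles containing at most one vertex of $V_{\mathrm{low}}=\{v\mid\deg(v)\le 3\}$; degree-$1$ vertices cannot lie on cycles anyway. There are two cases.
\begin{itemize}
\item \emph{No low-degree vertex.} Enumerate all induced cycles of $G[V_{\mathrm{in}}]$ with the induced-cycle enumeration of~\cite{UnoSat2014}, which runs with polynomial delay (or output-polynomial time, which suffices after standard buffering if needed).
\item \emph{Exactly one low-degree vertex $w$.} Enumerate induced cycles of $G[V_{\mathrm{in}}\cup\{w\}]$ passing through $w$. Equivalently, for each pair of neighbours $a,b$ of $w$ in $V_{\mathrm{in}}$ that are non-adjacent, enumerate induced $a$-$b$ paths in $G[V_{\mathrm{in}}\setminus (N(w)\setminus\{a,b\})]$. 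Deleting the other neighbours of $w$ guarantees that the closed cycle has no chord at $w$. The triangle case, where $a$ and $b$ are adjacent, is handled directly.
\end{itemize}
Each cycle is produced exactly once, up to fixing an order on $\{a,b\}$. The same emptiness test (connectivity of $a$ and $b$ in the restricted graph) preserves polynomial delay.

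The main obstacle is using~\cite{UnoSat2014} correctly. We need its induced $s$-$t$ path enumeration to have polynomial delay, and we need the emptiness tests so that the outer loops over pairs and over $w$ do not add superpolynomial gaps. Everything else follows directly from \autoref{lem:max_deg_5_min_def_all}.
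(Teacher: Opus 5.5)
Your proposal is correct and follows the paper's proof essentially step by step. It uses the same three-way split from \autoref{lem:max_deg_5_min_def_all}, the same induced $s$-$t$ paths in $G[V_{\mathrm{in}}\cup\{s,t\}]$, and the same two cases for cycles: induced cycles of $G[V_{\mathrm{in}}]$, or paths between two neighbours of the unique low-degree vertex with its other neighbours removed, all via \cite{UnoSat2014}.

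Two small remarks. First, your connectivity pre-tests are harmless but unnecessary, since an empty polynomial-delay run already halts in polynomial time and there are only polynomially many runs. Second, your parenthetical fallback is false in general: output-polynomial time cannot be turned into polynomial delay by buffering. You never rely on it, though, since \cite{UnoSat2014} gives polynomial delay directly.
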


\begin{proof}
	\longversion{Uno and Satoh proved~\cite{UnoSat2014} that}\shortversion{According to~\cite{UnoSat2014}}, for a given graph $G=(V,E)$ and two vertices $s,t$, all $s$-$t$-paths as well as all cycles can be enumerated with polynomial delay.

	With $V_{i}\coloneqq\{v\in V \mid \deg(v)\in \{i,i+1\}\}$ we denote the vertices with degree $i$ or $i+1$ for $i\in\{2,4\}$. The algorithm will enumerate the inclusion minimal defensive alliances depending on the three cases mentioned in \autoref{lem:max_deg_5_min_def_all}. Observe that the three cases are distinct. Therefore, we only need to find algorithms enumerating each of the three cases with polynomial delay. For the first case the algorithm only needs to enumerate all vertices of degree~1.

	In the second case for each distinct pair $s,t\in V_2$, we enumerate all induced $s$-$t$-paths on $G[\{s,t\} \cup V_4]$. In this way we enumerate all paths for which the start and end points have degree~$2$ or~$3$ and the internal nodes have degree~$4$ or~$5$.

	This leaves us to enumerate all defensive alliances of the third case.
	Here we have to enumerate cycles $c_1,\ldots,c_{\ell},c_{\ell+1}=c_1$, divided into to the two cases ($V_2\cap \{c_1,\ldots,c_{\ell}\}\neq \emptyset$ and $V_2\cap \{c_1,\ldots,c_{\ell}\}= \emptyset$).
	We start with {$V_2\cap \{c_1,\ldots,c_{\ell}\}\neq \emptyset$}: The way we do this is inspired by \cite{ReaTar75,UnoSat2014}. By \autoref{lem:max_deg_5_min_def_all}, we know that there is at most one vertex $v\in V_2\cap \{c_1,\ldots,c_{\ell}\}$. The algorithm goes through all pairs of distinct vertices $s,t\in N(v)\cap V_4$ and enumerates all $s$-$t$-paths in $G[(V_4\setminus N(v))\cup \{s,t\}]$. Such an induced path $p_1,\ldots, p_\ell$ corresponds to a unique induced cycle including $v$, as $s=p_1,\ldots, p_\ell=t$ is an induced $s$-$t$-path (with $s,t\in N(v)$) \iffl $v,s=p_1,\ldots, p_\ell=t,v$ is an induced cycle, as by construction, the only neighbors of $v$ on this cycle are $p_1$ and $p_\ell$. Observe that this procedure enumerates all cycles including exactly one vertex from $V_2$ and none twice.

	To enumerate the remaining defensive alliances in $G$, we only need to enumerate all induced cycles in $G[V_4]$ by using the algorithm in \cite{UnoSat2014}. \longversion{Since all three cases are distinct and can be enumerated polynomial delay (without duplications), the theorem holds.}\qed
\end{proof}

\section{Another Inclusion Minimal Defensive Alliance}

\longversion{Now, we prove the following theorem.}

\begin{theorem}\label{thm:no_polydelay_max_degree_6}
	There is no output-polynomial time algorithm enumerating all inclusion minimal defensive alliances, even on bipartite graphs of degeneracy 2 and maximum degree 6, unless $\ptime=\NP$.
\end{theorem}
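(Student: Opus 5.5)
The standard route to ruling out output-polynomial enumeration is to show that the associated ``Another'' problem is \NP-hard. Its input is a graph $G$ together with a family $\mathcal{S}$ of inclusion minimal defensive alliances of $G$, and it asks whether $G$ has an inclusion minimal defensive alliance not in $\mathcal{S}$. If an output-polynomial algorithm $\mathcal{A}$ with bound $p$ existed, we could decide this question in polynomial time: run $\mathcal{A}$ for $p(|G|,|\mathcal{S}|+1)$ steps. If it stops within that budget, compare its output with $\mathcal{S}$. If it does not stop, then $G$ has more than $|\mathcal{S}|$ solutions, so a new one exists. I will use the special case $\mathcal{S}=\emptyset$, or a polynomial-size easily listed family, and reduce from an \NP-complete problem such as \textsc{3-SAT} (or a bounded-occurrence variant, to keep degrees small).

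The reduction builds a bipartite graph $G_\varphi$ of maximum degree~6 and degeneracy~2 from a formula $\varphi$. The graph has a polynomially bounded, explicitly computable set $\mathcal{S}_\varphi$ of ``trivial'' minimal defensive alliances. Any further inclusion minimal defensive alliance must correspond to a satisfying assignment.

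The gadgets exploit the degree thresholds already visible in \autoref{lem:max_deg_5_min_def_all}, now with degree~6 allowed. A degree-6 vertex $v$ needs $\deg_D(v)\geq 3$ when $\deg_{\overline D}(v)\le 4$, so inside an alliance it forces at least three neighbours in. This is the new ingredient beyond degree~5, where two neighbours always sufficed and paths and cycles characterized everything.

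The gadgets are as follows.
\begin{itemize}
\item For each variable, a long even cycle-like chain of degree-4/5/6 vertices with two possible ``sides'' (true and false path), padded by pendant-like attachments so that the degrees are exact.
\item For each clause, a degree-6 clause vertex adjacent to the literal vertices, together with padding vertices, so that joining the alliance requires enough literal neighbours.
\item Connections forcing that any alliance containing one clause/backbone vertex must contain all of them.
\end{itemize}
Bipartiteness is maintained by subdividing edges appropriately, and degeneracy~2 by ensuring every subgraph has a vertex of degree $\le 2$. For example, padding trees and subdivision vertices of degree~2 can be peeled off first, after which the remaining structure is a union of cycles and paths. Small alliances, such as pendant vertices and short induced paths/cycles of low-degree vertices in the padding, form the family $\mathcal{S}_\varphi$. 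I must make sure they are few and can be listed, and that they do not interfere with the backbone.

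The main obstacle is correctness in the direction ``new minimal alliance $\Rightarrow$ satisfying assignment''. It is not enough to show that an alliance exists exactly when $\varphi$ is satisfiable; I need an alliance that is \emph{inclusion minimal} and outside $\mathcal{S}_\varphi$. Conversely, every minimal alliance outside $\mathcal{S}_\varphi$ must encode an assignment, which requires that it cannot contain both sides of a variable gadget and cannot avoid the backbone. Because defensiveness is not monotone, I would argue in two steps:
\begin{enumerate}
\item A forcing/propagation argument: starting from any vertex of the backbone, the degree constraints force the whole backbone and exactly one side per variable.
\item A minimality argument: any proper subset of such a set either drops a backbone vertex, breaking propagation, or breaks a clause vertex's threshold. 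Here I would use the polynomial-time minimality test of~\cite{BazFerTuz2019} as conceptual guidance, and reason directly via the forced-vertex propagation.
\end{enumerate}
The delicate part is the degree bookkeeping, so that the thresholds are tight while staying within degree~6 and degeneracy~2.
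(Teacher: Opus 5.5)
Your outer framework matches the paper's: an output-polynomial enumerator decides \AnIMDefAll in polynomial time, so it suffices to prove \AnIMDefAll \NP-hard on the restricted class by a reduction from a bounded-occurrence SAT variant. The gap is in the reduction. No gadget is actually specified, but more importantly the consistency mechanism you describe cannot work.

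Defensive alliances are closed under union: for $v\in D_1$, enlarging $D_1$ to $D_1\cup D_2$ only increases $\deg_D(v)$ and decreases $\deg_{\overline{D}}(v)$. So if alliances encoding two assignments that differ on $x_i$ exist, their union is an alliance containing both sides of the gadget for $x_i$. The defensive conditions are pure lower bounds on the number of neighbours inside $D$. Hence no degree bookkeeping can make them force \emph{exactly} one side, as your step~1 claims.

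``At most one side'' can only come from minimality. Any set containing both sides must contain a proper sub-alliance, and the inclusion minimal alliances inside that witness must themselves be listed; otherwise they are new solutions that do not come from an assignment. So the listed family cannot be harmless padding ``that does not interfere with the backbone''; it is where the hardness lives. The option $\mathcal S=\emptyset$ is vacuous anyway: $V$ is always a defensive alliance, so an instance with an empty list is always a \yes-instance.

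This is exactly the paper's key idea. It lists all pendant singletons and $\{z_i,y_i,\overline{y}_i\}$. A new minimal alliance containing $z_i$ therefore needs two of $y_i,\overline{y}_i,u_{2i}$ but cannot contain both $y_i$ and $\overline{y}_i$, so it contains $u_{2i}$ and exactly one side. It also lists the connected components of $G[C\cup\{x_i,\overline{x}_i,y_i,\overline{y}_i\mid i\in[n]\}]$. Consequently a new alliance containing a clause vertex cannot close up inside the literal--clause part and is pushed onto the cycle of vertices $u_r$. Each $u_r$ has three pendant neighbours, so all its non-pendant neighbours are forced, which yields every $u_r$, $z_i$ and $c_j$.

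Two smaller points. For the direction from a satisfying assignment, you need not prove minimality directly as in your step~2. It suffices to exhibit a defensive alliance that contains no listed set, because every inclusion minimal alliance inside it is then new. Also, naively subdividing edges for bipartiteness would cut your forcing chains, since a degree-$2$ subdivision vertex is already defended by a single neighbour; the paper's graph is bipartite by construction.
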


In view of the results from the previous section, one observes a sharp boundary between graphs of maximum degree 5 and~6, and likewise, between graphs of degeneracy~1 (i.e., forests), also see \autoref{thm:enum-forests-defall} for the enumeration case, and those with degeneracy 2 or larger.
There is an interesting parallel to the classical decision problem, i.e., the question if, given a graph~$G$ and an integer~$k$, there exists a defensive alliance of size at most~$k$ in~$G$: also here, we have an easy case, i.e., a  polynomial-time algorithm, for solving this problem on graphs of maximum degree~5, while the case of graphs of maximum degree~6 poses an \textsf{NP}-hard situation, see \cite{RedKar2026}.

To prove this theorem, we use the following problem.

\problemdef{Another Inclusion Minimal Defensive Alliance} {(\AnIMDefAll)}{An undirected graph $G=(V,E)$, and a set $L \subseteq 2^V$ of inclusion minimal defensive alliances.}{Is there an inclusion minimal defensive alliance $D\subseteq V$ with $D \notin L$?}

This definition is based on \cite{KurMan2026}. A similar technique was used in \cite{BorMak2024,BroDKLUW2024,JohPapYan88a}.
Observe that there is a difference between this definition of \textsc{Another} problems and the one in \cite{CreKPSV2019,Str2010,Str2019}. In our definition, we have a decision problem. So it would be enough if the output is a \yes or a \no. In the definition  \cite{CreKPSV2019,Str2010,Str2019}, for the case of a \yes, we need to produce a defensive alliance (i.e., this is a functional problem). This makes a difference, as  \cite{Str2010} includes results for \textsf{IncP}, while our definition gives results for \textsf{OutputP}.

\begin{proposition}\label{prop:from-enumeration-to-decision}
	If there is an output-polynomial time algorithm~$\mathcal{A}$ for enumerating inclusion minimal defensive alliances, then  $\AnIMDefAll\in\ptime$.
\end{proposition}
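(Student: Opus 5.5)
The plan is the standard ``run the enumerator with a clock'' argument. Let $p$ be the polynomial bounding the running time of $\mathcal{A}$, i.e., on a graph $G$ with $n$ vertices $\mathcal{A}$ stops after at most $p(|G|,N)$ steps, where $N$ is the number of inclusion minimal defensive alliances of~$G$. Given an instance $(G,L)$ of \AnIMDefAll, we run $\mathcal{A}$ on $G$ for at most $p(|G|,|L|+1)$ steps. This bound is polynomial in the input size $|G|+|L|$, since $|L|$ is part of the input and $p$ is a polynomial (w.l.o.g.\ monotone). We record every output of $\mathcal{A}$ during this time.

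The case distinction is as follows.
\begin{enumerate}
\item If $\mathcal{A}$ outputs some set $D$ with $D\notin L$ within the time budget, we answer \yes. Since $\mathcal{A}$ is correct, $D$ is an inclusion minimal defensive alliance.
\item If $\mathcal{A}$ halts within the budget and every output belongs to $L$, then its outputs form the complete set of inclusion minimal defensive alliances of~$G$, and this set is contained in~$L$. We answer \no.
\item If $\mathcal{A}$ has not halted after $p(|G|,|L|+1)$ steps, we answer \yes. If instead $G$ had at most $|L|$ inclusion minimal defensive alliances, then by monotonicity $\mathcal{A}$ would have stopped within $p(|G|,|L|)\le p(|G|,|L|+1)$ steps. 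Hence $N>|L|$, so some alliance lies outside~$L$.
\end{enumerate}
Membership tests $D\in L$ are clearly polynomial, since every element of $L$ is a subset of $V$ given explicitly.

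The only delicate point is that the bound $p(|x|,|A(x)|)$ involves the unknown output size. This is handled by the monotonicity assumption on $p$: we replace $p$ by a monotone polynomial upper bound. Since the elements of $L$ are given as inclusion minimal defensive alliances, we may assume $L$ has no duplicates; otherwise we remove them first, or simply note that $|L|$ only enlarges the budget. We therefore do not even need the polynomial-time minimality check of~\cite{BazFerTuz2019} for this direction.

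Together, these cases place \AnIMDefAll\ in~\ptime.
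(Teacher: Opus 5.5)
Your proof is correct and uses the same clocked-simulation argument as the paper: run $\mathcal{A}$ for $p(|x|,|L|+1)$ steps and check each output against $L$. Your explicit third case (answer \yes if $\mathcal{A}$ has not halted by the cutoff, because then $N>|L|$) is actually the right way to finish, whereas the paper's write-up answers \no whenever no new solution has appeared by the cutoff, which is only justified if $\mathcal{A}$ has halted, since an output-polynomial enumerator may emit all its solutions late.
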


\begin{proof}
	Assume that there is an output-polynomial time algorithm~$\mathcal{A}$ for enumerating inclusion minimal defensive alliances, with the polynomial~$p$ bounding the running time. We can then construct a polynomial-time algorithm for solving \AnIMDefAll as follows.
	Let $x$ be an instance and $L$ be the list of given solutions. Then we run $\mathcal{A}$ on~$x$. Every time $\mathcal{A}$ produces an output $D$, we check whether $D\in L$. If $D\notin L$ we can output \yes and terminate, as we have enumerated a solution that is not on our list. Otherwise, $\mathcal{A}$ continues. Assume that $\mathcal{A}$ needs would run more than $p(\vert x\vert, \vert L\vert +1)$ steps. Then, $\mathcal{A}$ would output at least $\vert L \vert + 1$ solutions. By pigeon hole, at least one of these solutions, say, $D$, would not belong to~$L$. Hence, upon seeing~$D$, our algorithm would have already output \yes and would have terminated already. Hence, we can safely terminate our algorithm after $p(\vert x\vert, \vert L\vert+1)$ steps. Namely, if we did not see a solution that is not on our list in this time, we would have enumerated the whole list~$L$ and nothing more and therefore, we can output \no.
	\qed
\end{proof}

We will now prove that \AnIMDefAll is \NP-complete.
Together with \autoref{prop:from-enumeration-to-decision}, this shows \autoref{thm:no_polydelay_max_degree_6}.
For the $\NP$-hardness proof, we use \textsc{Monotone 3-Sat-(2,2)}. This is a variation of \textsc{3-Sat}, where each variable appears twice positively and twice negatively, and each clause either contains three non-negated variables or three negated variables. \longversion{The reference}\shortversion{Paper}~\cite{DarDoc2021} provides an \NP-completeness proof for this\longversion{ problem}.

\begin{figure}[bt]
	\centering
	\begin{tikzpicture}[transform shape]
		\tikzset{every node/.style={circle,minimum size=0.1cm}}
		\node  (labelwi) at (0.4,0.4) {$z_{i}$};
		\draw[rounded corners] (6.5, 1.5) rectangle (-6.5, 4.5);
		\node[draw,label={below:$c_{j_1}$},fill=black] (c1) at (4,1) {};
		\node[draw,label={above:$c_{j_2}$},fill=black] (c2) at (4,-1) {};
		\node[draw,label={below:$c_{j_3}$},fill=black] (c3) at (-4,1) {};
		\node[draw,label={above:$c_{j_4}$},fill=black] (c4) at (-4,-1) {};
		\node[draw,rectangle] (c'1) at (4.75,1) {};
		\node[draw,rectangle] (c'2) at (4.75,-1) {};
		\node[draw,rectangle] (c'3) at (-4.75,1) {};
		\node[draw,rectangle] (c'4) at (-4.75,-1) {};
		\node[draw,label={below:$y_i$},fill=black] (yi) at (1.3,0) {};
		\node[draw,label={below:$\overline{y}_i$},fill=black] (y'i) at (-1.3,0) {};
		\node[draw,label={below:$x_i$}] (xi) at (2.6,0) {};
		\node[draw,label={below:$\overline{x}_i$}] (x'i) at (-2.6,0) {};
		\node[draw,rectangle,fill=black] (xi1) at (2.3,0.5) {};
		\node[draw,rectangle,fill=black] (xi2) at (2.6,0.5) {};
		\node[draw,rectangle,fill=black] (xi3) at (2.9,0.5) {};
		\node[draw,rectangle,fill=black] (x'i1) at (-2.3,0.5) {};
		\node[draw,rectangle,fill=black] (x'i2) at (-2.6,0.5) {};
		\node[draw,rectangle,fill=black] (x'i3) at (-2.9,0.5) {};
		\node[draw,rectangle,fill=black,label={right:$w_{i}$}] (wi) at (0,-1) {};
		\node[draw] (zi) at (0,0) {};
		\node[draw] (u1) at (-4.9,2) {};
		\node[draw,fill=black] (u2) at (-4,2) {};
		\node[draw] (u3) at (-3.1,2) {};
		\node[draw] (u4) at (-0.9,2) {};
		\node[draw,fill=black] (u5) at (0,2) {};
		\node[draw] (u6) at (0.9,2) {};
		\node[draw] (u7) at (3.1,2) {};
		\node[draw,fill=black] (u8) at (4,2) {};
		\node[draw] (u9) at (4.9,2) {};
		\node[draw,rectangle,fill=black] (u11) at (-5.2,2.75) {};
		\node[draw,rectangle,fill=black] (u12) at (-4.9,2.75) {};
		\node[draw,rectangle,fill=black] (u13) at (-4.6,2.75) {};
		\node[draw,rectangle] (u21) at (-4.3,2.75) {};
		\node[draw,rectangle] (u22) at (-4,2.75) {};
		\node[draw,rectangle] (u23) at (-3.7,2.75) {};
		\node[draw,rectangle,fill=black] (u31) at (-3.4,2.75) {};
		\node[draw,rectangle,fill=black] (u32) at (-3.1,2.75) {};
		\node[draw,rectangle,fill=black] (u33) at (-2.8,2.75) {};
		\node[draw,rectangle,fill=black] (u41) at (-1.2,2.75) {};
		\node[draw,rectangle,fill=black] (u42) at (-0.9,2.75) {};
		\node[draw,rectangle,fill=black] (u43) at (-0.6,2.75) {};
		\node[draw,rectangle] (u51) at (-0.3,2.75) {};
		\node[draw,rectangle] (u52) at (0,2.75) {};
		\node[draw,rectangle] (u53) at (0.3,2.75) {};
		\node[draw,rectangle,fill=black] (u61) at (0.6,2.75) {};
		\node[draw,rectangle,fill=black] (u62) at (0.9,2.75) {};
		\node[draw,rectangle,fill=black] (u63) at (1.2,2.75) {};
		\node[draw,rectangle,fill=black] (u71) at (2.8,2.75) {};
		\node[draw,rectangle,fill=black] (u72) at (3.1,2.75) {};
		\node[draw,rectangle,fill=black] (u73) at (3.4,2.75) {};
		\node[draw,rectangle] (u81) at (3.7,2.75) {};
		\node[draw,rectangle] (u82) at (4,2.75) {};
		\node[draw,rectangle] (u83) at (4.3,2.75) {};
		\node[draw,rectangle,fill=black] (u91) at (4.6,2.75) {};
		\node[draw,rectangle,fill=black] (u92) at (4.9,2.75) {};
		\node[draw,rectangle,fill=black] (u93) at (5.2,2.75) {};
		\node[] (dots1) at (0,4) {$\cdots$};
		\node[] (dots2) at (-2,2) {$\cdots$};
		\node[] (dots3) at (2,2) {$\cdots$};
		\node[] (U) at (0,3.5) {$U$};

		\path (zi) edge[-] (wi);
		\path (zi) edge[-] (yi);
		\path (zi) edge[-] (y'i);
		\path (xi) edge[-] (yi);
		\path (x'i) edge[-] (y'i);
		\path (c1) edge[-] (xi);
		\path (c2) edge[-] (xi);
		\path (c3) edge[-] (x'i);
		\path (c4) edge[-] (x'i);
		\path (c1) edge[-] (c'1);
		\path (c2) edge[-] (c'2);
		\path (c3) edge[-] (c'3);
		\path (c4) edge[-] (c'4);
		\path (u5) edge[-] (zi);
		\path (u1) edge[-] (c3);
		\path (u3) edge[-] (c3);
		\path (u7) edge[-] (c1);
		\path (u9) edge[-] (c1);
		\path (u1) edge[-] (u2);
		\path (u2) edge[-] (u3);
		\path (u4) edge[-] (u5);
		\path (u5) edge[-] (u6);
		\path (u7) edge[-] (u8);
		\path (u8) edge[-] (u9);
		\path (u1) edge[-] (u11);
		\path (u1) edge[-] (u12);
		\path (u1) edge[-] (u13);
		\path (u2) edge[-] (u21);
		\path (u2) edge[-] (u22);
		\path (u2) edge[-] (u23);
		\path (u3) edge[-] (u31);
		\path (u3) edge[-] (u32);
		\path (u3) edge[-] (u33);
		\path (u4) edge[-] (u41);
		\path (u4) edge[-] (u42);
		\path (u4) edge[-] (u43);
		\path (u5) edge[-] (u51);
		\path (u5) edge[-] (u52);
		\path (u5) edge[-] (u53);
		\path (u6) edge[-] (u61);
		\path (u6) edge[-] (u62);
		\path (u6) edge[-] (u63);
		\path (u6) edge[-] (u61);
		\path (u6) edge[-] (u62);
		\path (u6) edge[-] (u63);
		\path (u7) edge[-] (u71);
		\path (u7) edge[-] (u72);
		\path (u7) edge[-] (u73);
		\path (u8) edge[-] (u81);
		\path (u8) edge[-] (u82);
		\path (u8) edge[-] (u83);
		\path (u9) edge[-] (u91);
		\path (u9) edge[-] (u92);
		\path (u9) edge[-] (u93);
		\path (u3) edge[-] (dots2);
		\path (u4) edge[-] (dots2);
		\path (u6) edge[-] (dots3);
		\path (u7) edge[-] (dots3);
		\path (u9) edge[-] (6,2);
		\path (u1) edge[-] (-6,2);
		\path (u9) edge[-] (6,2);
		\path (-6,4) edge[-] (-6,2);
		\path (6,4) edge[-] (6,2);
		\path (6,4) edge[-] (dots1);
		\path (-6,4) edge[-] (dots1);
		\path (xi) edge[-] (xi1);
		\path (xi) edge[-] (xi2);
		\path (xi) edge[-] (xi3);
		\path (x'i) edge[-] (x'i1);
		\path (x'i) edge[-] (x'i2);
		\path (x'i) edge[-] (x'i3);

	\end{tikzpicture}
	\caption{ \label{fig:trdf_paranp_maxdeg3}Reduction using \textsc{Monotone 3-Sat-(2,2)}}

\end{figure}

\begin{theorem}\label{thm:another_def_all_max_deg_6}
	\AnIMDefAll is $\NP$-complete, even on bipartite graphs of degeneracy~2 and maximum degree~6.
\end{theorem}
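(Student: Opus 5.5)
Membership in \NP is immediate. We guess $D\subseteq V$ and check three things in polynomial time: that $D$ is a defensive alliance, that $D\notin L$, and that $D$ is inclusion minimal (by Proposition~5 of~\cite{BazFerTuz2019}).

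For hardness we reduce from \textsc{Monotone 3-Sat-(2,2)} using the gadget of \autoref{fig:trdf_paranp_maxdeg3}.

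\emph{Variable gadget.} For each variable $z_i$ we take:
\begin{itemize}
\item a central vertex $z_i$ with a pendant $w_i$;
\item neighbours $y_i,\overline{y}_i$ of $z_i$;
\item vertices $x_i,\overline{x}_i$, each with three pendant leaves, joined to $y_i$ resp.\ $\overline{y}_i$;
\item edges from $x_i$ to the two positive clause vertices $c_{j_1},c_{j_2}$ containing $z_i$, and from $\overline{x}_i$ to the two negative ones $c_{j_3},c_{j_4}$.
\end{itemize}
Each clause vertex $c_j$ gets a pendant $c_j'$.

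\emph{Global structure $U$.} A long cycle-like path of vertices, each carrying three pendant leaves, is attached to every $z_i$ and every $c_j$. It forces that any minimal alliance containing a $z_i$ must contain all of $U$ and hence every $z_i$ and every $c_j$.

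The list $L$ consists of the trivially found minimal alliances: all singletons $\{v\}$ with $v$ pendant, and any other ``local'' minimal alliances that occur inside the gadgets (e.g.\ small cycles if any). We can list these explicitly in polynomial time.

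\emph{Degree and degeneracy checks.} Every vertex has degree at most~6. For instance $x_i$ has $3+1+2=6$ neighbours and $c_j$ has $3+1+2=6$. A quick check shows the graph is bipartite, and peeling off leaves repeatedly shows the degeneracy is~2.

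\emph{Forward direction.} Given a satisfying assignment $\beta$, define
\[
D \;=\; U\cup\{z_i,y_i/\overline{y}_i,x_i/\overline{x}_i \text{ according to } \beta\}\cup\{c_j\}.
\]
Then verify the defensive-alliance inequality vertex by vertex, using the counts below.
\begin{itemize}
\item $c_j$: it has three variable-vertex neighbours, two $U$-neighbours, and one pendant. It needs $\deg_D+1\ge \deg_{\overline D}$, i.e.\ at least one true literal neighbour (this is exactly where satisfiability enters).
\item $x_i$ when chosen: its neighbours in $D$ are $y_i$ and both clauses (3 in total), against 3 leaves outside, so the inequality holds.
\item $z_i$: it has $y_i$ and a $U$-vertex inside, against $w_i$ and the other $y$ outside, so it is fine.
\end{itemize}
Afterwards we prove minimality: any nonempty defensive-alliance subset must contain some $U$- or $c$-vertex, propagate through $U$ to everything, and then be forced to contain at least the same choice. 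Here we can argue via the uniqueness in each gadget that no proper subset works.

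\emph{Backward direction.} Take a minimal alliance $D\notin L$. First show $D$ contains no pendant, since otherwise $D$ would be a singleton and in $L$. Then show $D$ must intersect $U\cup\{z_i,c_j\}$, because the remaining local structures are trees whose minimal alliances are in $L$. Next show that $D\supseteq U$ and contains all $z_i,c_j$. Each $z_i$ needs exactly one of $y_i,\overline{y}_i$: it needs at least one, and if both are present then a subset argument gives non-minimality or degree violations. That $y$ in turn needs its $x$, because $y$ has degree 2 and needs one neighbour besides $z_i$. A chosen $x_i$ has 3 outside leaves and needs both clause neighbours, which is consistent. Finally, each $c_j$ needs a chosen literal neighbour, which yields a satisfying assignment.

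The main obstacle is making sure the gadget $U$ really propagates membership, i.e.\ that a $U$-vertex with three leaves needs both path neighbours. It has degree 5 plus an attachment. It must also be checked that exactly one literal side per variable is chosen, which hinges on the $z_i$ degree count with $w_i$. And we must rule out unexpected minimal alliances (e.g.\ cycles through $x_i,c_j$ pairs) that are neither in $L$ nor encode assignments. For these I would do a careful case analysis of the degree constraints and enlarge $L$ by any polynomially many spurious local alliances.
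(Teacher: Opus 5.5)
Your gadget is the same as the paper's, but the proof has a genuine gap exactly where the reduction's correctness lies: the list $L$. You leave $L$ as ``pendant singletons plus whatever local alliances occur''. You then claim that a minimal alliance containing some $z_i$ or $c_j$ must contain all of $U$. For this graph that claim is false, in two ways. First, $\{z_i,y_i,\overline{y}_i\}$ is an inclusion minimal defensive alliance that avoids $U$: $z_i$ has $2+1\ge 2$ and each $y$ has $1+1\ge 1$. Second, and more seriously, every connected component $K$ of $G[C\cup\{x_i,\overline{x}_i,y_i,\overline{y}_i\mid i\in[n]\}]$ is also one. Each $c_j\in K$ has its three literal neighbours inside against $c_j'$ and its two $U$-neighbours outside. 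Each literal vertex has its $y$ and both its clauses inside against three leaves. Each $y$ has its literal inside against $z_i$ outside. Minimality holds because inside $K$ these neighbourhoods are forced: a clause needs all three literals, a literal needs all its non-pendant neighbours, and a $y$ needs its literal. So any defensive subset is closed under them and equals $K$. These components are not local (one can span the whole positive or negative half of $\phi$), and they encode no assignment. Unless they are in $L$, an unsatisfiable $\phi$ is mapped to a \yes-instance. Your fallback ``enlarge $L$ by any polynomially many spurious alliances'' is precisely the statement that needs proof: you must exhibit $L$ and show that every minimal alliance outside it contains $U$. The paper takes $L$ to be the pendant singletons, the components $K$, and the triples. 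Now suppose a minimal $D\notin L$ avoided $U$. Any clause or literal vertex in $D$ would pull in a whole component, any $z_i$ would pull in both $y$'s, and a $y$ needs $z_i$ or its literal. So $D\in L$, a contradiction, and propagation along $U$ gives $U\cup C\cup\{z_i\mid i\in[n]\}\subseteq D$.

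Your forward direction also overclaims: the set built from a satisfying assignment is in general not minimal. Suppose both clauses of a true literal vertex $x_i$ contain another true literal. Then removing $x_i$ still leaves a defensive alliance: $y_i$ still has $z_i$, and each affected $c_j$ keeps both $U$-neighbours and at least one literal, so $3+1\ge 3$. The correct conclusion is weaker but sufficient: your set is a defensive alliance containing no member of $L$, so every minimal alliance inside it lies outside $L$. Once the components are in $L$, however, you must ensure none of them sits inside your set. This happens whenever all variables of a component are set true. The paper handles it by deleting one literal vertex (at most one per clause) among clauses satisfied by all three literals. A smaller slip in your backward direction: $y_i$ does not need $x_i$ when $z_i\in D$, since $1+1\ge 1$. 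What you actually need is the converse, that a literal vertex in $D$ forces its $y$. This makes ``$x_i$ true iff $x_i\in D$'' consistent: a negative literal $\overline{x}_i\in D$ forces $\overline{y}_i\in D$, hence $x_i\notin D$, since otherwise $\{z_i,y_i,\overline{y}_i\}\subsetneq D$.
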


\begin{proof}
	The membership in $\NP$ follows by guess-and-check.

	For the $\NP$-hardness, we use \textsc{Monotone 3-Sat-(2,2)}. Let $X=\{x_1,\ldots,x_n\}$ be the set of variables and $C=\{c_1,\ldots,c_m\}$ be the set of clauses of a given instance~$\phi$.
	Define $G=(V,E)$ with
	\begin{equation*}
		\begin{split}
			V=\, & \{c_j,c_j'\mid j\in [m]\}\cup \{w_i,x_i,x_{i,k},\overline{x}_{i},\overline{x}_{i,k}, y_i,\overline{y}_i,z_i\mid i\in [n],k\in [3]\} \cup{} \\
			     & \{u_r, u_{r,k} \mid r\in [2n+4m],k\in [3]\}                                                                                                \\
			E=\, & \{\{u_r, u_{r,k}\}, \{u_r,u_{(r\bmod (2n+4m))+1}\} \mid r\in [2n+4m],k\in [3]\}\cup{}                                                      \\
			     & \{\{z_i,y_i\},\{x_i,y_i\},\{z_i,\overline{y}_i\},\{\overline{x}_i,\overline{y}_i\},\{u_{2i},z_i\},\{w_{i},z_i\}\mid i\in [n]\}\cup{}       \\
			     & \{\{{x}_i,{x}_{i,k}\},\{\overline{x}_i,\overline{x}_{i,k}\}\mid i\in [n],k\in [3]\}\cup{}                                                  \\
			     & \{\{c_j,u_{2n+4j-3}\},\{c_j,u_{2n+4j-1}\}, \{c_j,l\}, \{c_j,c_j'\}\mid j\in [m], l\in c_j\}.
		\end{split}
	\end{equation*}
	In the last line, we naturally interpret the vertices $x_i$ and $\overline{x}_i$ as literals.
	A sketch of this graph can be found in \autoref{fig:trdf_paranp_maxdeg3}. All vertices in that drawing show their true degrees but the clause vertices from $C$ that will have exactly three neighbors from the literal vertices.
	Notice that the drawing of the cycle~$U$ formed by the vertices $u_r$ is slightly misleading, as ``first'' the $2n$ many vertices $u_{2i}$ adjacent to $z_i$ (and the neighbors of $u_{2i}$) appear on this cycle, followed by the vertices adjacent to $c_j$ (and neighbors) that form the ``last part'' of $4m$ vertices of this cycle.   Let $P$ be the set of pendant vertices, i.e., $P=\{c_j',w_i,x_{i,k},\overline{x}_{i,k},u_{r,k}\mid j\in[m], i\in[n],k\in[3],r\in[2n+4m]\}$. These vertices are drawn with small rectangles in \autoref{fig:trdf_paranp_maxdeg3}, while vertices of higher degree are drawn with small circles.

	After explaining the structure of the constructed graph~$G$, next we show that~$G$ indeed satisfies the claimed graph properties.

	\shortversion{\begin{claim}\shortversion{$(\ast)$}
			Graph~$G$ is bipartite, has maximum degree~6 and degeneracy~2.
		\end{claim}}

	\begin{toappendix}
		\subsection{Details of the proof of \autoref{thm:another_def_all_max_deg_6}}
		\begin{claim}
			Graph~$G$ is bipartite, has maximum degree~6 and degeneracy~2.
		\end{claim}
		\begin{pfclaim}
			The graph~$G$ is bipartite with the partition classes $B_1=\{y_i,\overline{y}_i,x_{i,k},\overline{x}_{i,k},w_i\mid i\in [n]\} \cup \{u_{2r-1},u_{2r,k}\mid r\in [n+2m], k\in [3]\} \cup C$ and $B_2=\{w_i,x_i,\overline{x}_i,z_i\mid i\in [n]\} \cup \{u_{2r},u_{2r-1,k}\mid r\in [n+2m], k\in [3]\}\cup \{c_j'\mid j \in [m]\}$.
			We have drawn vertices of $B_2$ in black in \autoref{fig:trdf_paranp_maxdeg3} for clarity.

			Graph~$G$ has maximum degree~6; more specifically, the vertices of highest degree~6 are the literal vertices $x_i$ and $\overline{x}_i$ and the vertices $u_{2i}$, $c_j$, $u_{2n+4j-3}$ and $u_{2n+4j-1}$ (for $i\in [n]$ and $j\in[m]$) on the mentioned cycle.

			Furthermore, $G$ has degeneracy 2: After deleting pendant vertices and $y_i,\overline{y}_i$ (as $\deg(y_i)=\deg(\overline{y}_i)=2$) for $i\in [n]$, $z_i$ and $x_i,\overline{x}_i$ have at most $2$ neighbors and can be deleted. Then, all $c_j$ will have at most 2 neighbors. Finally, only the cycle formed by the vertices $u_r$ remains and can be removed.
		\end{pfclaim}

	\end{toappendix}

	To fully specify the \AnIMDefAll instance, we also have to provide a list of inclusion minimal defensive alliances. For this purpose, let $T$ be the set of connected components of $G[C\cup \{x_i,\overline{x}_i,y_i,\overline{y}_i \mid i\in [n]\}]$.
	Define $L\coloneqq\{\{v\}\mid v\in P\}\cup T \cup \{\{z_i,y_i,\overline{y}_i\}\mid i\in [n]\}$.

	\shortversion{\begin{claim} \shortversion{$(\ast)$}
			$L$ is a list of inclusion minimal defensive alliances.
		\end{claim}}

	\begin{toappendix}
		\begin{claim}
			$L$ is a list of inclusion minimal defensive alliances.
		\end{claim}

		\begin{pfclaim}
			Clearly, $\{\{v\}\mid v\in P\}$ contains only inclusion mainimal defensive alliances.
			Moreover, the sets $\{z_i,y_i,\overline{y}_i\}$ form inclusion minimal defensive alliances for each $i\in [n]$.

			Let us now discuss~$T$.
			By monotonicity of~$\phi$, for each $A\in T$, either $A\subseteq C^+\cup \{x_i,y_i \mid i\in [n]\}$ or $A\subseteq C^-\cup \{\overline{x}_i,\overline{y}_i \mid i\in [n]\}$, where $C^+\subseteq C$ is the set of clauses that contains only unnegated variables and $C^-\subseteq C$ is the set of clauses that contains only negated variables.

			By symmetry, we discuss the first case only in the following. Observe that each $A\in T$ is a defensive alliance: each $y_i\in A$ has its neighbor~$z_i$ not in~$A$, but its other neighbor $x_i$ is in~$A$; each $x_i\in A$ has three pendant neighbors outside~$A$, but three other neighbors within~$A$, namely, $y_i$ and two clause vertices from $C^+$; each $c_j\in A\cap C^+$ has three variable vertices from $A$ in its neighborhood, and also three neighbors outside~$A$, namely, $c_j'$, $u_{2n+4j-1}$ and $u_{2n+4j-3}$. \longversion{Furthermore}\shortversion{Also}, every $A\in T$ is minimal: as can be seen by revisiting the case-by-case consideration of neighborhoods, removing any number of vertices from~$A$ in order to produce a proper subset~$A'$ would change the neighborhood of \longversion{at least one}\shortversion{some} remaining vertex $x\in A'$ \longversion{such that}\shortversion{s.t.} $x$ is undefended, i.e., $\deg_{A'}(x)+1< \deg_{V\setminus A'}(x)$.
		\end{pfclaim}
	\end{toappendix}

	Finally, we show that the given $\NP$-hardness reduction actually works. It is straightforward to see that the graph and also the list $L$ can be constructed in polynomial time from the formula~$\phi$.

	\begin{claim}
		There is an inclusion minimal defensive alliance $A\notin L$ \iffl there is a satisfying assignment for $\phi$.
	\end{claim}
	\begin{pfclaim}
		Let us first show the if-direction.
		Let $\alpha$ be a satisfying assignment. Define   \begin{equation*}\begin{split}A\coloneqq{}& \{u_r \mid r\in [2n+4m]\}\cup  C\cup \{z_i,y_i,x_i\mid i\in [n], \alpha(x_i)=1\}\cup{}\\& \{z_i,\overline{y}_i,\overline{x}_i\mid i\in [n], \alpha(x_i)=0\}.\end{split}\end{equation*}
		If there is an $l\in \{x_i, \overline{x}_i\mid i\in [n]\} \cap A$ such that for all $c_j \in C$ with $l \in c_j$, $c_j$ is satisfied by three literals then delete $l$ from $A$ (but only one per clause). Observe that for all $r\in [2n+4m]$, $\deg_{A}(u_r)+1\geq 3=\deg_{\overline{A}}(u_r)$. Let $i\in [n]$. Then $\deg_A(z_i)+1=3>2=\deg_{\overline{A}}(z_i)$. In the following, we consider $\alpha(x_i)=1$. The case $\alpha(x_i)=0$ follows analogously. Observe that $\deg_A(y_i)+1\geq 2 >1= \deg_{\overline{A}}(y_i)$. Furthermore, $\deg_A(x_i)+1 = 4>3=\deg_{\overline{A}}(x_i)$. For all $j\in [m]$, $c_j$ is satisfied and $\deg_A(c_j)+1\geq 4>3\geq \deg_{\overline{A}}(c_j)$. Thus, $A$ is a defensive alliance. Since we deleted $l\in \{x_i, \overline{x}_i\mid i\in [n]\} \cap A$ with for all $l\in c_j \in C$, $c_j$ is satisfied by three literals, there is no $A'\in L$ with $A' \subseteq A$. Hence, there is an inclusion minimal defensive alliance $ A$ not in $L$.

		\shortversion{The only-if direction can be found in the appendix.}
		\begin{toappendix}
			\shortversion{\smallskip \noindent We now provide the only-if direction of the proof of the following claim: \begin{claim}
					There is an inclusion minimal defensive alliance $A\notin L$ if and only if there is a satisfying assignment for $\phi$.
				\end{claim}}
			\longversion{Conversely, l}\shortversion{L}et there be an inclusion minimal defensive alliance $D\subseteq V$ with $D\notin L$.  Observe that $P\cap D\neq \emptyset$. For all $r\in [2n+4m]$, $\deg_P(u_r)=3\leq \deg_{\overline{D}}(u_r)$ and $\deg_{\overline{P}}(u_r) \geq  3 \geq  \deg_{D}(u_r)$. Hence if $u_r\in D$, $N(u_r)\setminus P\subseteq D$. By an inductive argument, if $u_s\in D$ for some $s\in [2n+4m]$, then
			\[\{z_i\mid i\in[n]\} \cup \{u_r\mid r\in [2n+4m]\} \cup C\subseteq D.\]
			Analogously to the case $u_r \in D$ with $r\in [2n+4m]$, for an $i\in [n]$ such that $x_i \in D$, $N(x_i)\setminus P\subseteq D$. Let $i\in [n]$. For $z_i\in D$, $\deg_D(z_i)\geq 2$. Since $\{z_i,y_i,\overline{y}_i\}\in L$, $z_i\in D$ implies $\{y_i,\overline{y}_i\}\nsubseteq D$ and $u_{2i}\in D$.

			Now further assume for $r\in [2n+4m]$ such that $u_r\notin D$. Let there be a $j\in [m]$ such that $c_j\in D$. Hence, the three literals must be in $D$. Without loss of generality let $x_{i_1},x_{i_2},x_{i_3}$ be the literals in $c_j$. As mentioned above, $N(x_{i_j})\setminus P\subseteq D$. By an inductive argument, $D$ includes a connected component of  $G[C\cup \{x_i,\overline{x}_i,y_i,\overline{y}_i \mid i\in [n]\}]$. Hence, $D$ is not minimal or $D$ is in $L$. Hence, if there is a $c_j\in D$, then $\{u_r\mid r\in [2n+4m]\}\subseteq D$. This implies \[\{z_i\mid i\in[n]\} \cup \{u_r\mid r\in [2n+4m]\} \cup C\subseteq D.\] Define $\phi:X\to\{0,1\}$ with $\phi^{-1}(1)\coloneqq\{x_i\mid i\in [n],x_i\in D\}$. Let $c_j\in C$; then, there is at least one literal $x'\in C_j\cap D$. If $x'$ is a positive literal, then $\phi$ satisfies~$c_j$. Therefore, we consider $x'$ is negative ($x'=\overline{x}_i$). As mentioned above, $z_i\in D$ and $N(\overline{x}_i)\setminus P\subseteq D$. Hence, $\overline{y}_i\in D$. If $x_i\in D$, then $y_i\in D$. This would imply $\{z_i,y_i,\overline{y}_i\}\subseteq D$ which contradicts the minimality of $D$. \shortversion{\hfill$\Diamond$}
		\end{toappendix}
	\end{pfclaim}
	This concludes the proof of the claimed $\NP$-completeness result.\qed
\end{proof}

\section{Parameterized Enumeration Algorithms}

Motivated by the proven non-existence of an output-polynomial time algorithm (assuming $\ptime\neq\NP$), we now  consider the parameterized complexity for enumerating  inclusion minimal defensive alliances. We consider the parameters neighborhood diversity, pathwidth and treewidth. We start by providing an \FPT-delay algorithm when parameterized by neighborhood diversity. In the end we consider treewidth. We prove that there is no \FPT-delay algorithm, unless $\FPT = \W{1}$.

\longversion{\subsection{Neighborhood Diversity}}

\shortversion{A}\longversion{One of the most innocently looking} structural graph parameters is\longversion{ probably} \emph{neighborhood diversity} $\nd(G)$ of a graph~$G$. This is the number of equivalence classes of the relation $u\sim_{\nd}v$ \iffl $N(u)\backslash\{v\}=N(v)\backslash\{u\}$.\longversion{ Obviously, $\nd(K_n)=1$ and $\nd(K_{n,n})=2$ if $n>1$.} Notice that, given~$G$, $\nd(G)$ can be computed in polynomial time.

\begin{toappendix}

	\begin{lemma}\label{lem:add_nd_vertex}
		Let $G=(V,E)$ be a graph, and $C_1,\ldots,C_{\nd(G)}$ be the neighborhood classes of $G$ and $A\subseteq V$ a defensive alliance. For all $i\in [\nd(G)]$ with $C_i\cap A\neq \emptyset$ and $v\in C_i$, $A\cup \{v\}$ is a defensive alliance.
	\end{lemma}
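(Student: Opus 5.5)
We prove \autoref{lem:add_nd_vertex} by checking the defensive-alliance inequality for every vertex of $A'\coloneqq A\cup\{v\}$. If $v\in A$ there is nothing to show, so assume $v\notin A$. Fix some $u\in C_i\cap A$; it exists by assumption. The two vertices $u$ and $v$ are twins, so $N(u)\setminus\{v\}=N(v)\setminus\{u\}$. We then distinguish two kinds of vertices of $A'$.

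\emph{Vertices $w\in A$.} Adding $v$ can only raise $\deg_{A'}(w)$ and lower $\deg_{\overline{A'}}(w)$. Hence
\[\deg_{A'}(w)+1\ \geq\ \deg_A(w)+1\ \geq\ \deg_{\overline{A}}(w)\ \geq\ \deg_{\overline{A'}}(w),\]
because $A$ is a defensive alliance. This direction is monotone and needs nothing about twins.

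\emph{The new vertex $v$.} We compare $v$ with its twin $u$, which already satisfies $\deg_A(u)+1\geq\deg_{\overline{A}}(u)$. Write $N_0\coloneqq N(u)\setminus\{v\}=N(v)\setminus\{u\}$. Two cases arise, according to whether $C_i$ is a clique class or an independent class, i.e., whether $u$ and $v$ are adjacent.

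\begin{enumerate}
\item If $u\not\sim v$, then $N(v)=N(u)=N_0$. Since $v\notin N(u)$, we have $\deg_{A'}(v)=|N_0\cap A|=\deg_A(u)$ and $\deg_{\overline{A'}}(v)=|N_0\setminus A|=\deg_{\overline{A}}(u)$. The inequality for $u$ therefore transfers verbatim to $v$.
\item If $u\sim v$, then $N(v)=N_0\cup\{u\}$ with $u\in A$, so $\deg_{A'}(v)=|N_0\cap A|+1$ and $\deg_{\overline{A'}}(v)=|N_0\setminus A|$. For $u$ we have $v\in N(u)\setminus A$, so $\deg_A(u)=|N_0\cap A|$ and $\deg_{\overline{A}}(u)=|N_0\setminus A|+1$. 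Thus
\[\deg_{A'}(v)+1=\deg_A(u)+2\ \geq\ \deg_{\overline{A}}(u)+1=\deg_{\overline{A'}}(v)+2,\]
which is even stronger than required.
\end{enumerate}

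The only delicate point is the bookkeeping of whether $v$ lies in $N(u)$, i.e., the split into the clique and independent cases. Once this is handled, every vertex of $A\cup\{v\}$ is defended, so $A\cup\{v\}$ is a defensive alliance.
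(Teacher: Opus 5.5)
Your proof is correct and follows the paper's argument. Vertices of $A$ are handled by monotonicity, and the new vertex $v$ inherits the defensive inequality from its twin $u\in C_i\cap A$. The paper avoids your clique/independent case split by noting that $u$ and $v$, both lying in $A\cup\{v\}$, have identical degrees into $A\cup\{v\}$ and into its complement, and then applying monotonicity to $u$.
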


	\begin{proof}
		Let $i\in [\nd(G)]$, $v\in C_i$ with $u\in C_i\cap A$ and $v\notin A$. For $A_v \coloneqq A \cup \{v\}$ and $w\in A$, $\deg_{A_v}(w) +1 \geq \deg_{A}(w)+1 \geq \deg_{\overline{A}}(w)\geq \deg_{\overline{A_v}}(w).$
		Furthermore, \[\deg_{{A_v}}(v) + 1 = \deg_{A_v}(u) +1 \geq \deg_{A}(u)+1 \geq \deg_{\overline{A}}(u) \geq \deg_{\overline{A_v}}(u) = \deg_{\overline{A_v}}(v).\]
		Hence, $A_v$ is a defensive alliance. \qed
	\end{proof}

\end{toappendix}

\begin{lemrep}\applabel{lem:nd_global_min_char}
	Let $G=(V,E)$ be a graph, $C_1,\ldots,C_{\nd(G)}$ be the neighborhood classes of $G$ and $A\subseteq V$ a defensive alliance. Then, $A$ is an inclusion minimal defensive alliance \iffl the following conditions hold:
	\begin{enumerate}
		\item for all $v\in A$, $A\setminus \{v\}$ is no defensive alliance and \label{con:nd_min_local}
		\item for all $I \subsetneq \{i\in [\nd(G)] \mid C_i\cap A \neq \emptyset\}$, $A\cap \left( \bigcup_{j\in I} C_j\right)$ is no defensive alliance. \label{con:nd_min_classes}
	\end{enumerate}
\end{lemrep}

\begin{proof}
	Observe that all sets in the constraints are subsets of $A$. Hence if one would be a defensive alliance, $A$ would not be inclusion minimal.

	For the other direction, assume the constraints hold but there is a defensive alliance $B\subsetneq A$. Define $I \coloneqq \{ i\in [\nd(G)] \mid C_i\cap B\}$. By using  \autoref{lem:add_nd_vertex} inductively, $A\cap \left(  \bigcup_{j\in I} C_j\right)$ is a defensive alliance. Condition~\ref{con:nd_min_classes} implies $I=\{i\in [\nd(G)] \mid C_i\cap A \neq \emptyset\}$. Consider some $v \in A \setminus B$. Observe that there is an $i\in [\nd(G)]$ with $v\in C_i \cap A$ and $B\cap C_i\neq \emptyset$. Again we can use  \autoref{lem:add_nd_vertex} to prove that $A\setminus \{v\}$ is a defensive alliance. This contradicts Condition~\ref{con:nd_min_local}. Hence, $A$ is minimal in the first place. \qed
\end{proof}

We will use \autoref{lem:nd_global_min_char} to build an \ilp for enumerating all inclusion minimal defensive alliances. To this end, we\longversion{ first} define the enumeration variant of an \ilps (called \textsc{Enum ILP}):
Given be $k,m\in \mathbb{N}$, $A\in \mathbb{Z}^{m\times k}$ and $b\in \mathbb{Z}^m$ such that the size of $L_{A,b}:=\{x\in \mathbb{Z}^k\mid Ax\leq b\}$ is bounded by $2^{h(k)p(s)}$, where $s$ is the size of the input, $h$ is a computable function and $p$ is some polynomial. The goal is to enumerate $L_{A,b}$, which can be done with the help of the following result.

\begin{proposition}[Proposition 5 of \cite{CreMPU2026}]\label{pro:ILP_FPT}
	\textsc{Enum ILP} can be enumerated with \FPT-delay when parameterized by $k$.
\end{proposition}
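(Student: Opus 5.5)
The plan is a \emph{flashlight search} (backtracking with a feasibility oracle) over the box of possible integer solutions. The oracle is Lenstra's (or Kannan's) algorithm for \textsc{ILP Feasibility}, which runs in time $k^{\mathcal{O}(k)}\cdot \mathrm{poly}(s)$ on a system with $k$ variables and encoding size~$s$. The size bound $|L_{A,b}|\le 2^{h(k)p(s)}$ is only a promise that the output is finite. The delay analysis will come from the depth of the search tree, not from the number of solutions.

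\textbf{Step 1 (a priori box).} I first show that every point of $L_{A,b}$ lies in a box $[-M,M]^k$ with $\log M\le \mathrm{poly}(s)$. If $L_{A,b}=\emptyset$, a single oracle call detects this and we stop. Otherwise I claim the polyhedron $Q=\{x\in\mathbb{R}^k\mid Ax\le b\}$ is bounded. Indeed, a nonzero vector $d$ of its recession cone $\{d\mid Ad\le 0\}$ can be chosen rational, and after scaling integral. Then $x+td\in L_{A,b}$ for every $x\in L_{A,b}$ and every $t\in\mathbb{N}$, contradicting finiteness. A bounded $Q$ has only vertices whose coordinates have encoding length polynomial in~$s$ (Cramer's rule, Hadamard bound). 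So $M\le 2^{\mathrm{poly}(s)}$, and we add the $2k$ box constraints $-M\le x_i\le M$.

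\textbf{Step 2 (recursive halving).} A search node is the original system plus box constraints $\ell_i\le x_i\le u_i$. At each node we call the oracle; we only ever descend into children that are feasible. If every interval is a single point, we output that point. Otherwise we pick a coordinate with $\ell_i<u_i$ and split it into $[\ell_i,\lfloor(\ell_i+u_i)/2\rfloor]$ and $[\lfloor(\ell_i+u_i)/2\rfloor+1,u_i]$. We test both children and recurse, depth-first, into those that are feasible. The children partition the integer points of the parent, so every solution is output exactly once.

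\textbf{Step 3 (delay).} Each split halves one interval, so the tree depth is at most $k\cdot\log_2(2M+1)=k\cdot\mathrm{poly}(s)$. The box constraints have size $\mathrm{poly}(s)$, so every oracle instance keeps $k$ variables and size $\mathrm{poly}(s)$. Every visited node is feasible, hence has a solution leaf below it. Between two consecutive outputs (and before the first and after the last), the search therefore walks up and down at most twice the depth. It makes $\mathcal{O}(1)$ oracle calls per node. This gives delay $k^{\mathcal{O}(k)}\cdot\mathrm{poly}(s)$, which is \FPT-delay in~$k$.

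The main obstacle I expect is Step~1: making the boundedness argument and the vertex-size bound rigorous. Everything else depends on it, because the depth, and hence the delay, must not depend on $|L_{A,b}|$ or on unbounded coordinate values. If the recession-cone argument turns out delicate, a fallback is to use the known bound on the integer hull. This bound says that a feasible ILP with a finite solution set has all its solutions within norm $2^{\mathrm{poly}(s)}$, in the style of Papadimitriou's small-solution bound.
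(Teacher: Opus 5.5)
The paper does not prove Proposition~\ref{pro:ILP_FPT}. It imports the statement as a black box from \cite{CreMPU2026}, so there is no in-paper argument to compare against. Your flashlight search is a correct, self-contained proof.

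\begin{itemize}
\item The recession-cone argument is right: a nonempty finite $L_{A,b}$ forces $Q$ to be a polytope.
\item Cramer's rule with Hadamard's inequality then gives an explicit bound, e.g.\ $M=\lfloor(\sqrt{k}\,\Delta)^k\rfloor$, where $\Delta$ is the largest absolute value of an entry of the constraint matrix or of $b$. State $M$ as such a closed formula, since the algorithm must write it down without knowing the vertices.
\item The halving tree has depth at most $k\lceil\log_2(2M+1)\rceil=k\cdot\mathrm{poly}(s)$.
\item Each oracle instance has $k$ variables and size $\mathrm{poly}(s)$.
\item Only feasible nodes are entered, so there are no dead ends.
\end{itemize}

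Together these give delay $f(k)\cdot\mathrm{poly}(s)$. Kannan's algorithm gives $f(k)=k^{\mathcal{O}(k)}$; Lenstra's original bound, $2^{\mathcal{O}(k^3)}$, also suffices.

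Your route uses only the finiteness of $L_{A,b}$. The quantitative promise $|L_{A,b}|\le 2^{h(k)p(s)}$ then holds automatically, since $|L_{A,b}|\le(2M+1)^k=2^{k\cdot\mathrm{poly}(s)}$.

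For the way this paper uses the proposition, your Step~1 is not needed at all, even though you flagged it as the main obstacle. Every variable of $\text{ILP}_A$ has an explicit finite domain ($x_i\in[\vert C_i\vert]$, all auxiliary variables in $\{0,1\}$), so the initial box can be read off directly. One point to keep in mind when instantiating $k$ there: it is the total number of variables of $\text{ILP}_A$, including the auxiliary $y_i,u_i,w_i,z_{i,I}$. That number is larger than the count of $\vert A\vert$ variables stated in the paper, but it is still bounded by a function of $\nd(G)$, so the application goes through.
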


\noindent
In the remaining section, we will prove the following result:

\begin{theorem}\label{thm:FPT-delay_nd}
	All inclusion minimal defensive alliances can be enumerated with \FPT-delay when parameterized by $\nd$.
\end{theorem}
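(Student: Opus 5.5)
The plan is to reduce the enumeration to a bounded number of \textsc{Enum ILP} instances with $d\coloneqq\nd(G)$ variables, and then apply \autoref{pro:ILP_FPT}. First compute the classes $C_1,\ldots,C_d$ in polynomial time. Each class is either a clique or an independent set, and any two classes are either completely adjacent or completely non-adjacent. Vertices in one class are twins. Hence, for $A\subseteq V$, whether $A$ is a defensive alliance depends only on the count vector $x=(x_1,\ldots,x_d)$ with $x_i=|A\cap C_i|$.

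Concretely, for $v\in C_i$ with $x_i>0$ we have
$\deg_A(v)=\sum_{j\in N_i}x_j+[C_i\text{ clique}](x_i-1)$,
where $N_i$ is the set of classes $j\neq i$ adjacent to $C_i$. Also $\deg(v)=:\delta_i$ is a constant. So the defense condition for class $i$ is the linear inequality $2\deg_A(v)+1\geq\delta_i$. Thus, for a fixed support $S=\{i\mid x_i>0\}$, the statement ``$x$ is a defensive alliance'' is the conjunction of the following linear constraints:
\begin{itemize}
\item $1\le x_i\le |C_i|$ for $i\in S$;
\item $x_i=0$ for $i\notin S$;
\item the defense inequality for each $i\in S$.
\end{itemize}

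Next I encode \autoref{lem:nd_global_min_char}, again for a fixed guessed $S$.

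Condition~\ref{con:nd_min_local} says that for every $i\in S$, the vector $x-e_i$ is not a defensive alliance. If $x_i=1$, the support shrinks; I handle this by additionally guessing $T\subseteq S$, the set of classes with $x_i=1$, and imposing $x_i=1$ for $i\in T$ and $x_i\ge 2$ for $i\in S\setminus T$. Then the support of $x-e_i$ is known. Failure of the defensive-alliance property of $x-e_i$ means that some class $j$ in its support violates its (now strict, over the integers) inequality, or that the vector is zero. This is a disjunction.

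Condition~\ref{con:nd_min_classes} is handled the same way: for every $I\subsetneq S$, the restriction of $x$ to $I$ must have some $j\in I$ whose inequality, evaluated on the restricted vector, fails.

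To make the disjunctions linear and, crucially, disjoint, I guess for each disjunction a \emph{canonical witness}: the smallest failing index $j$. I then add the constraints that $j$ fails and that every smaller index in the relevant support satisfies its inequality. All of these are linear in $x$. The number of guesses (of $S$, $T$, and the witnesses) is at most $2^d\cdot 2^d\cdot d^{d}\cdot d^{2^d}$, a function of $d$ only. Every valid vector $x$ satisfies exactly one guess, since the witnesses are determined by $x$. Hence no vector is produced twice. The solution set of each ILP has size at most $\prod_i(|C_i|+1)\le 2^{d\log(n+1)}$, so the size hypothesis of \textsc{Enum ILP} holds.

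The algorithm iterates over all guesses. For each guess, it runs the \FPT-delay enumerator of \autoref{pro:ILP_FPT}. For each output vector $x$, it enumerates all sets $A$ with $|A\cap C_i|=x_i$ (products of combinations) with polynomial delay. By twin equivalence, every such $A$ is an inclusion minimal defensive alliance by \autoref{lem:nd_global_min_char}, and distinct vectors give distinct sets. Each vector yields at least one set. An empty ILP instance costs only \FPT time, and there are $f(d)$ instances, so the overall delay stays \FPT.

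I expect the main obstacle to be the correct linearization of the negated conditions. This requires care in three places:
\begin{itemize}
\item the support change when some $x_i=1$;
\item the clique versus independent-set self term;
\item turning the strict negations into integer inequalities.
\end{itemize}
It also requires arranging the witness choices canonically, so that the union over guesses is a disjoint union and hence duplicate-free.
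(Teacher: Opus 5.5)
Your proof is correct, and its skeleton matches the paper's. You fix the support, encode \autoref{lem:nd_global_min_char} as linear constraints on the class-count vector, enumerate with \autoref{pro:ILP_FPT}, and expand each vector $x$ into all sets with $|D\cap C_i|=x_i$.

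You differ in how the negated, disjunctive conditions are linearized and how duplicates are excluded. The paper builds one $\text{ILP}_A$ per support $A$ and keeps the disjunctions inside the program through big-$M$ indicator variables: $y_i$ for tightness of class $i$, $w_i$ for $x_i=1$, $u_i=w_i\wedge y_i$, and $z_{i,I}$ for failure of class $i$ on the restriction to $I$. Duplicates are avoided because these auxiliaries are functions of $x$ (\autoref{obs:ILP_idea_var}). Local minimality is rephrased via tightness: removing $v\in C_i$ breaks the alliance iff some other vertex of $D$ adjacent to $v$ is tight.

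You instead push every disjunction out of the program by branching on $T$ and on canonical smallest witnesses. Each program then has only the $d$ count variables, and the solution sets are pairwise disjoint by canonicity. The price is $d^{O(2^d)}$ programs instead of $2^d$, which is still a function of $d$ alone. In return you need neither the connectivity pruning of $H[A]$ nor the separate treatment of $|A|=1$, since Condition~\ref{con:nd_min_classes} already forces a connected support.

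Your direct encoding of ``$x-e_i$ is not an alliance'' also buys robustness. In the paper's constraint \eqref{con:ILP_local}, $u_i=1$ whenever $x_i=1$ and $y_i=1$. This demands two tight classes in $N_i\cap A$ even when $i\notin N_i$, for instance for singleton classes. It therefore wrongly rejects every edge of $C_5$, where all classes are singletons and both endpoints are tight, even though these edges are exactly the inclusion minimal defensive alliances of $C_5$. That inequality should read $1+a_iu_i\le\sum_{j\in N_i\cap A}y_j$. Your case split via $T$, which fixes the support of $x-e_i$ as $S\setminus\{i\}$ for $i\in T$ and then asks for a failing class there, handles precisely this situation correctly.
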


Let $G=(V,E)$ be a graph with the neighborhood diversity classes $C=\{C_1,\ldots,C_{\nd(G)}\}$. Observe that, for all $v\in D\subseteq  V$, $\deg_D(v)+1\geq \deg_{\overline{D}}(v)$ is equivalent to $\deg(v)\leq 2 \deg_D(v)+1$\longversion{, since $\deg(v) = \deg_D(v)+\deg_{\overline{D}}(v)$}. We will use \shortversion{this}\longversion{the second equation} to construct the ILP. To simplify the ILP definition, we introduce some notation for $i\in [\nd(G)]$:
\begin{itemize}
	\item $N_i\coloneqq \{j\in [\nd(G)]\mid C_j\cap N(C_i)\neq \emptyset\}$ is the set of indices of neighbored classes (possibly including~$i$  when $C_i$ is a clique and $\vert C_i\vert \geq 2$).
	\item $a_i\in \{0,1\}$ satisfies $a_i=1$ \iffl $i \in N_i$.
	\item $d_i\in \mathbb{N}$ equals $\deg(v)$ for any $v\in C_i$.
	\item $\delta_i(I)\coloneqq 2\cdot\left(-a_i  + \sum_{j\in N_i\cap I}x_{j}\right)$ for $I\subseteq A$.
\end{itemize}
The algorithm goes through each\longversion{ subset} $A\subseteq  [\nd(G)]$.  $A$ should represent the set of classes intersecting the inclusion minimal defensive alliance we would like to enumerate. For each $A$ we will construct an \ilp. So, we fix an inclusion minimal defensive alliance $D\subseteq V$ and\longversion{ thus a set $A\subseteq [\nd(G)]$ with} $A=\{i\in [\nd(G)] \mid C_i \cap D\neq \emptyset\}$.

If the graph $H[A] \coloneqq (A,\{\{i,j\}\mid i,j\in A, i\in N_j\})$ (that might contain loops) is not connected, we can discard $A$\shortversion{ as then $A$ cannot correspond to any minimal alliance in~$G$}.\longversion{ Namely, a defensive alliance $D\subseteq V$ with $D\cap C_i\neq \emptyset$ for each $i \in A$ could not be minimal, as it would not be connected.} For $\vert A\vert =1$, it is easy to check if we encode a minimal defensive alliance, since the corresponding vertex set of~$A$ would be an independent set or a clique, and clearly, these alliances can be enumerated efficiently. \longversion{

}In the following, we will always assume (without further mentioning) that any $A\subseteq [\nd(G)]$ that we process further satisfies that $H[A]$ is connected and that $\vert A\vert>1$.

Now, we explain\longversion{ the idea of} the \ilp. We have five types of variables: for $i\in A$ the four types of variables $x_i\in [ \vert C_i\vert],y_i,u_i,w_i\in \{0,1\}$ and for $i\in I\subseteq A$, $z_{i,I}\in \{0,1\}$.
For $i\in A$, $x_i$ represents $\vert D\cap C_i\vert $. The ideas of the remaining types are described in \autoref{obs:ILP_idea_var}.
We  introduce the inequalities for the \ilp (we call it $\text{ILP}_A$) in \autoref{fig:ilp-formulation}.

\begin{figure}[tb]
	\longversion{
		\begin{align}
			d_i                               & \leq 2\cdot\left(-a_i  + \sum_{j\in N_i\cap A}x_{j}\right)+1                       & \forall i\in A\label{con:def_all}                        \\
			d_i+2                             & \leq 2\cdot\left(-a_i  + \sum_{j\in N_i\cap A}x_{j}\right)+ 1 + 2y_i               & \forall i\in A\label{con:yi_id1}                         \\
			d_i+ \vert V\vert  (1-y_i)        & \geq 2\cdot\left(-a_i  + \sum_{j\in N_i\cap A}x_{j}\right)                         & \forall i\in A\label{con:yi_id2}                         \\
			(1-w_{i})                         & \leq x_i - 1 \leq \vert V\vert \cdot (1-  w_{i})                                   & \forall i\in A\label{con:wi}                             \\
			2 \cdot u_{i}                     & \leq w_i + y_i \leq 1 + u_i                                                        & \forall i\in A\label{con:ui}                             \\
			1 + u_i                           & \leq \left(\sum_{j\in N_i \cap A}y_{j}\right)                                      & \forall i\in A\label{con:ILP_local}                      \\
			d_i                               & \leq 2\cdot\left(-a_i  + \sum_{j\in N_i\cap I}x_{j}\right)+1 +\vert V\vert z_{i,I} & \forall  I\subseteq A, \forall i\in I\label{con:ziI_id1} \\
			d_{i}+  \vert V\vert (1- z_{i,I}) & > 2\cdot\left(-a_i  + \sum_{j\in N_i\cap I}x_{j}\right)+1                          & \forall  I\subseteq A, \forall i\in I\label{con:ziI_id2} \\
			1                                 & \leq \sum_{i\in I} z_{i,I}                                                         & \forall I\subseteq A\label{con:ILP_class}
		\end{align}}
	\shortversion{\begin{align}
			d_i                               & \leq \delta_i(A)+1                               & \forall i\in A\label{con:def_all}                        \\
			d_i+2                             & \leq \delta_i(A)+ 1 + 2y_i                       & \forall i\in A\label{con:yi_id1}                         \\
			d_i+ \vert V\vert  (1-y_i)        & \geq \delta_i(A)                                 & \forall i\in A\label{con:yi_id2}                         \\
			(1-w_{i})                         & \leq x_i - 1 \leq \vert V\vert \cdot (1-  w_{i}) & \forall i\in A\label{con:wi}                             \\
			2 \cdot u_{i}                     & \leq w_i + y_i \leq 1 + u_i                      & \forall i\in A\label{con:ui}                             \\
			1 + u_i                           & \leq \left(\sum_{j\in N_i \cap A}y_{j}\right)    & \forall i\in A\label{con:ILP_local}                      \\
			d_i                               & \leq\delta_i(I)+1 +\vert V\vert z_{i,I}          & \forall  I\subseteq A, \forall i\in I\label{con:ziI_id1} \\
			d_{i}+  \vert V\vert (1- z_{i,I}) & > \delta_i(I)+1                                  & \forall  I\subseteq A, \forall i\in I\label{con:ziI_id2} \\
			1                                 & \leq \sum_{i\in I} z_{i,I}                       & \forall I\subseteq A\label{con:ILP_class}
		\end{align}}
	\caption{An ILP associated to $A\subseteq V$.}
	\label{fig:ilp-formulation}
\end{figure}

\begin{obsrep}\label{obs:ILP_idea_var}\shortversion{$(\ast)$}
	Let $((x_j)_{j\in A},(y_j)_{j\in A},(u_j)_{j\in A},(w_j)_{j\in A},(z_{i,I})_{I\subseteq A, i\in I})$ be a solution of $\text{ILP}_A$. The following properties hold.
	\begin{enumerate}
		\item For $i\in A$, $y_i\in \{0,1\}$ is~1 \iffl $\delta_i(A)+1 \in\{ d_i, d_i + 1\}$.\label{con:yi_idea}
		\item For $i\in A$, $w_i\in \{0,1\}$ is~1 \iffl $x_i=1$.\label{con:wi_idea}
		\item For $i\in A$, $u_i\in \{0,1\}$ is~1 \iffl $w_i=1=y_i$.\label{con:ui_idea}
		\item \label{con:ziI_idea} For $I \subsetneq A$\longversion{ and}\shortversion{,} $i\in I$, we introduce $z_{i,I}\in \{0,1\}$ which is~0 \iffl \shortversion{ $\delta_i(I)+1 \geq  d_i$.}\longversion{\[\delta_i(I)+1 \geq  d_i.\]}
		\item The vector $((y_j)_{j\in A},(u_j)_{j\in A},(w_j)_{j\in A},(z_{i,I})_{I\subseteq A, i\in I})$ is uniquely determined by the vector $(x_j)_{j\in A}$. \label{con:identity}
	\end{enumerate}
\end{obsrep}
\begin{proof}
	First, we prove that Inequalities \eqref{con:yi_id1} and \eqref{con:yi_id2} ensure Property~\ref{con:yi_idea}.

	Let $i\in A$ and $D \subseteq V$ a corresponding set, and $v\in C_i \cap D$. Assume, $y_i=1$. Together with Inequality \eqref{con:yi_id2}, this implies
	$d_i=d_i+ \vert V\vert  (1-y_i) \geq \delta_i(A) .$
	By Inequality \eqref{con:yi_id1}, we obtain
	$d_i=d_i+ 2-2y_i\leq  \delta_i(A) +1.$
	Now, assume $y_i=0$. By Inequality \eqref{con:yi_id1},
	$ d_i+2 \leq \delta_i(A)+ 1 .$
	Hence, Property~\ref{con:yi_idea} holds.

	Since $x_i\geq 1$, Inequality \eqref{con:wi} implies Property~\ref{con:wi_idea}. By Inequality \eqref{con:ui}, Property~\ref{con:ui_idea} holds. Next, we consider the property for $z_{i,I}$. Let $i\in I \subsetneq  A$. Inequality \eqref{con:ziI_id1} implies that if $z_{i,I}=0$, then
	$ d_i \leq \delta_i(I)+1.$
	For $z_{i,I}=1$,  Inequality \eqref{con:ziI_id2} implies
	$ d_i=d_{i}+  \vert V\vert (1- z_{i,I}) > \delta_i(I)+1.$
	Now, we consider Property~\ref{con:identity}. Let the vector $x=(x_j)_{j\in A}$ be fixed. By the Properties~\ref{con:yi_idea},~\ref{con:wi_idea} and~\ref{con:ziI_idea}, the values of $((y_j)_{j\in A},(w_j)_{j\in A},(z_{i,I})_{I\subseteq A, i\in I})$ are uniquely defined by~$x$. Because of Property~\ref{con:ui_idea} and the fact the $((y_j)_{j\in A},(w_j)_{j\in A})$ are uniquely defined, the values of $(u_j)_{j\in A}$ are uniquely defined. \qed
\end{proof}

Since the values of $x=(x_j)_{j\in A}$ define a solution of $\text{ILP}_A$ uniquely, we will refer to $x$ itself as a solution of the $\text{ILP}_A$. In this sense, $\text{Sol}_A=\bigtimes_{j\in A} [\vert C_j\vert]$ is the solution space of $\text{ILP}_A$.
In the following, a set $D\subseteq V$ \emph{corresponds to} a vector $x\in \text{Sol}_A$  \iffl for all $j\in A$, $\vert D\cap C_j\vert=x_j$ and for all $i\in [\nd(G)]\setminus A$, $D\cap C_i=\emptyset$. Observe that, for all  $x\in \text{Sol}_A$, a corresponding set $D\subseteq V$, $i\in A$ and $v\in C_i \cap D$, $\deg_D(v)= \left(-a_i  + \sum_{j\in N_i\cap A} x_{j} \right)$.
We say that $D\subseteq V$ \emph{fits to} $A\subseteq [\nd(G)]$ if
$D \cap C_j \neq \emptyset$ for all $j\in A$ and $D \cap C_i = \emptyset$ for all $i\in [\nd(G)]\setminus A$. Then, define $\phi_A(D)\coloneqq (\vert D\cap C_j\vert)_{j\in A}\in \text{Sol}_A$. This function maps each vertex subset that fits to~$A$
to the vector it corresponds to. The idea of the algorithm is that for each inclusion minimal defensive alliance~$D$, $\phi_A(D)$ is the solution to $\text{ILP}_A$ and we can enumerate the inverse image of $\phi_A$ with polynomial delay.

\begin{lemrep}\applabel{lem:nd_ILP_solution}
	If $D$ is an inclusion minimal defensive alliance that fits to~$A$, then
	$\phi_A(D)$
	is a solution of $\text{ILP}_A$.
\end{lemrep}

\begin{proof}
	Let $D$ be an inclusion minimal defensive alliance such that for $i\in [\nd(G)]$, $D \cap C_i \neq \emptyset$ \iffl $i\in A$. Define $x_i^D \coloneqq \vert C_i \cap D\vert $ for $i\in A$, yielding the vector $x^D=(x_i^D)_{i\in A}$. Let the vectors $(y_i^D)_{i\in A},(w_i^D)_{i\in A},(u_i^D)_{i\in A}$ and $(z_{i,I}^D)_{I\subseteq A, i\in I}$ be defined by $x^D$ according to the reasoning behind \autoref{obs:ILP_idea_var}. Observe that with this, the inequalities \eqref{con:yi_id1}, \eqref{con:yi_id2}, \eqref{con:wi}, \eqref{con:ui}, \eqref{con:ziI_id1}, and \eqref{con:ziI_id2} hold. Since $D$ is a defensive alliance, for all $i\in A$ and $v\in D\cap C_i$,
	$d_i=\deg(v)  \leq  2\deg_D(v) + 1 = \delta_i(A) + 1. $
	By Condition~\ref{con:nd_min_classes} of \autoref{lem:nd_global_min_char}, for all $I\subsetneq A$, $D_I \coloneqq D\cap (\bigcup_{i\in I} C_i)$ is no defensive alliance. Hence, for each $I \subsetneq A$, there is an $i_I \in I$ such that for all $v\in D_I\cap C_{i_I}$, $d_{i_I}=\deg(v) > 2\cdot\deg_{D_I}(v)+1= \delta_{i_I}(I)+1.$ Hence, $z_{i,I}=1$ and Inequality \eqref{con:ILP_class} holds.

	Now, we will prove Inequality \eqref{con:ILP_local}. Let $i\in A$ and $v\in D \cap C_i$ also holds.
	By \autoref{lem:nd_global_min_char} Condition~\ref{con:nd_min_local}, $D_v\coloneqq D\setminus \{v\}$ is no defensive alliance. Thus, there are $j \in A$ and $s\in D_v \cap C_j$ such that $d_j > 2\deg_{D_v}(s)+1$. Since $D$ is a defensive alliance,  $d_j \leq 2\deg_{D}(s)+1$ implies $s\in N(v)$ and $j\in N_i$. Thus, $2\deg_{D}(s)+1\in \{d_j,d_j+1\}$ and $y_j=1$ by \autoref{obs:ILP_idea_var}, Condition~\ref{con:yi_idea}. If $i=j$, then $2 \leq \vert D \cap C_i\vert = x_i$, which implies with \autoref{obs:ILP_idea_var} Properties~\ref{con:wi_idea}  and~\ref{con:ui_idea}: $w_i=0=u_i$. Hence, $1\leq \sum_{j\in N_i\cap  A}y_j$ (this is Inequality~\eqref{con:ILP_local} for $u_i=0$).\qed
\end{proof}

\begin{lemrep}\applabel{lem:nd_def-alliances-for-ILP-solution}
	Let $x\in\text{Sol}_A$ be a solution to $\text{ILP}_A$. Then, there is a inclusion minimal defensive alliance $D^x$ that corresponds to~$x$.
\end{lemrep}
\begin{proof}
	As noted above, a solution $x\in\text{Sol}_A$   to $\text{ILP}_A$ uniquely determines the complete solution
	$((y_j)_{j\in A},(u_j)_{j\in A},(w_j)_{j\in A},(z_{i,I})_{I\subseteq A, i\in I})$. Let $D^x$ be any set corresponding to~$x$. For all $i\in A$ and $v \in D^x \cap C_i$, \[ \deg(v)= d_i \leq \delta_j(A)+1= 2 \deg_{D^x}(v)+1 .\]
	Therefore, $D^x$ is  a defensive alliance. To prove minimality, we use \autoref{lem:nd_global_min_char}. First, we will prove Condition~\ref{con:nd_min_local}. For this, let $i\in A$ and $v\in D^x \cap C_i$. By Inequality \eqref{con:ILP_local}, there is at least one $j\in N_i \cap A$ such that $y_j=1$. If $y_i=1=w_i$ (so $\vert D^x\cap C_i\vert=1$; see \autoref{lem:nd_global_min_char}), then $u_i=1$ and there is a $j\in( N_i\cap A)\setminus \{i\}$ with $y_j=1$. Hence, there is an $s\in (D^x \cap C_j)\setminus \{v\}$ with \[ 2\deg_{D^x}(s)+1 =  \delta_j(A)+1 \in \{d_j,d_j+1\}= \{\deg(s),\deg(s)+1\}. \] This implies that $\deg(s)\geq2(\deg_{D^x}(s)-1)+1=2\deg_{D^x\setminus\{v\}}(s)+1.$ Therefore, Constraint~\ref{con:nd_min_local} of \autoref{lem:nd_global_min_char} holds.

	Let $I \subsetneq A$. For Condition~\ref{con:nd_min_classes} of \autoref{lem:nd_global_min_char}, it is enough to show that $D^x_I \coloneqq D^x\cap \bigcup_{i\in I}C_i$ is no defensive alliance. By Inequality \eqref{con:ILP_class}, there is a $j_I$ with $z_{j_I,I}=1$. \autoref{obs:ILP_idea_var} implies for all $v\in D_I^x \cap C_i$, $\deg(v)=d_i> \delta_i(I)+1 = 2 \deg_{D^x_I}(v)+1.$ Hence, $D_I^x$ is no defensive alliance. By \autoref{lem:nd_global_min_char}, $D^x$ is an inclusion minimal defensive alliance.\qed
\end{proof}

\begin{correp}\applabel{cor:nd_enumerate_equiv_def_all}
	Let $G=(V,E)$ be a graph with the neighborhood diversity classes $C_1,\ldots,C_{\nd(G)}$ and $A \subseteq [\nd(G)]$. For each solution $x=(x_i)_{i\in A}\in \text{Sol}_A$ of $\text{ILP}_A$, we can enumerate $\phi_A^{-1}(x)$ with polynomial delay.
\end{correp}

\begin{proof}
	By the proof of \autoref{lem:nd_ILP_solution}, any set $D \subseteq V$ that corresponds to~$x$ is in $\phi_A^{-1}(x)$. Hence, a depth-first-search enumerating $\bigcup_{i\in A} \binom{C_i}{x_i}$ (which can be done with polynomial delay) would suffice. \qed
\end{proof}

\begin{proof}[of \autoref{thm:FPT-delay_nd}]
	Let $G=(V,E)$ be a graph with neighborhood diversity $p=\nd(G)$.
	Let $A\subseteq [p]$. We first treat the rather trivial case that $|A|=1$. Then, we build the auxiliary graph $H[A]$ and only proceed if it is connected. Then, we build the corresponding \ilp;
	$\text{ILP}_A$ includes $6\cdot \vert A\vert + 2^{\vert A\vert}+ 2 \cdot \vert A\vert  \cdot 2^{\vert A \vert} \in \mathcal{O}\left(2^{p\log(p)}\right)$ inequalities over $\vert A \vert = p$ variables. Thus, $\text{ILP}_A$ can be constructed in \FPT-time and all solutions can be enumerated with \FPT-delay (see \autoref{pro:ILP_FPT}). By \autoref{cor:nd_enumerate_equiv_def_all}, for each solution $x$ we can enumerate $\phi_A^{-1}(x)$ with polynomial delay. Clearly, for each solution $x$ of $\text{ILP}_B$ and $x'$ of $\text{ILP}_{B'}$ with $B,B'\subseteq [p]$, $\phi_B^{-1}(x)\cap\phi_B'^{-1}(x') =\emptyset$. Since there are at most $2^p$ many sets $A\subseteq [p]$, \longversion{all minimal defensive alliances can be enumerated with \FPT-delay when parameterized by~\nd}\shortversion{the result follows}.\qed
\end{proof}

\begin{toappendix}
	\noindent
	\autoref{thm:FPT-delay_nd} implies the following corollary:
	\begin{corollary}
		All inclusion minimal defensive alliances can be enumerated with \FPT-delay when parameterized by the vertex cover number.
	\end{corollary}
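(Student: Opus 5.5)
The corollary follows from \autoref{thm:FPT-delay_nd} once we bound the neighborhood diversity by a function of the vertex cover number. I would show that $\nd(G)\leq 2^{\tau}+\tau$, where $\tau$ denotes the vertex cover number of~$G$. Since the algorithm of \autoref{thm:FPT-delay_nd} has delay $f(\nd(G))\cdot\mathrm{poly}(|V|)$, this gives delay $f(2^{\tau}+\tau)\cdot\mathrm{poly}(|V|)$, which is \FPT-delay for the parameter~$\tau$.

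To prove the bound, fix a minimum vertex cover $S$ with $|S|=\tau$. The set $I=V\setminus S$ is independent, so every $v\in I$ has $N(v)\subseteq S$. If $u,v\in I$ satisfy $N(u)=N(v)$, then, because $u$ and $v$ are non-adjacent, $N(u)\setminus\{v\}=N(v)\setminus\{u\}$, so $u\sim_{\nd}v$. Hence the vertices of $I$ lie in at most $2^{\tau}$ classes, one for each possible neighborhood $T\subseteq S$. Each vertex of $S$ forms at most one further class. Altogether this gives at most $2^{\tau}+\tau$ classes. Here I am using that $\sim_{\nd}$ is an equivalence relation, as the paper does implicitly when it speaks of equivalence classes. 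Because we only need an upper bound, it does not matter whether some vertices of $S$ fall into the same class as vertices of $I$.

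The only subtle point is the definition of the parameter, and it is harmless. The algorithm of \autoref{thm:FPT-delay_nd} never needs $S$ or $\tau$. It computes the classes of $\sim_{\nd}$ directly, in polynomial time. The bound on $\nd(G)$ is used only in the running-time analysis. This is consistent with the footnote allowing parameterizations that are not polynomial-time computable, so no vertex cover ever has to be found. There is no real obstacle. The one step to be careful about is using the non-adjacency of vertices in $I$, which ensures that equal open neighborhoods yield $\sim_{\nd}$-equivalence.
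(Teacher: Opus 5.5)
Your proposal is correct and follows the same route as the paper, which simply derives the corollary from \autoref{thm:FPT-delay_nd} via the standard bound $\nd(G)\le 2^{\tau}+\tau$ that you prove explicitly. The only implicit step is that the function $f$ in the \FPT-delay bound may be taken non-decreasing (replace it by $k\mapsto\max_{k'\le k}f(k')$), which is without loss of generality.
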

\end{toappendix}

\longversion{\subsection{Treewidth}}

The treewidth~$\tw$ is possibly the most prominent structural graph parameter\longversion{ that is defined as follows}.
\begin{toappendix}
	\begin{definition} A \emph{tree decomposition} of a graph $G$ is a pair $\mathcal{T}=(T, \{X_t\}_{t\in V(T)})$, where $T$ is a tree whose node $t$ is assigned a vertex subset $X_t\subseteq V(G)$, called a bag, satisfying the following:
		\begin{itemize}
			\item [1)] $\bigcup_{t\in V(T)}X_t=V(G)$;
			\item [2)] for each edge $uv\in E(G)$, there is some node $t$ of $T$ such that $u\in X_t, v\in X_t$;
			\item [3)] for each $u\in V(G)$, the set $T_u=\{t\in V(T)\mid u\in X_t\}$ induces a subtree of~$T$.
		\end{itemize}
	\end{definition}

	The \emph{width} of tree decomposition $\mathcal{T}$ is given by $\max_{t\in V(T)}|X_t|-1$. The \emph{treewidth} of a graph~$G$, denoted as $\tw(G)$, is the minimum width over all tree decompositions of~$G$.

\end{toappendix}
However, given~$G$, $\tw(G)$ cannot be computed in polynomial time unless $\ptime=\NP$. However, it can be computed in \FPT-time, when parameterized by $\tw$. Hence, we can still consider parameterized enumeration with this parameter.

\longversion{In this section}\shortversion{Now}, we prove that there is no \FPT-delay algorithm enumerating all inclusion minimal defensive alliances\longversion{ when}\shortversion{,} parameterized by treewidth, unless $\FPT=\W{1}$. For this, we use the following \shortversion{\W{1}-hard }problem\shortversion{; see~\cite{Sze2008}}:

\problemdef{Minimum Maximum Outdegree} {(\MinMaxOut)}{An
	undirected graph $G=(V,E)$, $w : E \to \mathbb{N}^+$ given in unary, $r\in \mathbb{N}$}
{Is there an orientation of the edges such that, for each vertex, the sum of the weights of outgoing edges
	is at most $r$?}

\longversion{This problem is known to be \W{1}-hard \cite{Sze2008}.}\shortversion{\noindent} The proof of the following result is inspired by~\cite{BliWol2018}.

\begin{figure}[bt]
	\centering
	\begin{tikzpicture}[transform shape,scale=0.95]
		\tikzset{every node/.style={circle,minimum size=0.1cm}}

		\node[draw,rectangle] (ve11) at (-1,0.5) {};
		\node[draw,rectangle] (ve12) at (-1,0.9) {};
		\node[draw,rectangle] (ve13) at (-1,1.3) {};
		\node[draw,rectangle] (ve21) at (-4.3,-0.75) {};
		\node[draw,rectangle] (ve22) at (-3.9,-0.75) {};
		\node[draw,rectangle] (ve23) at (-3.5,-0.75) {};
		\node[draw,rectangle] (ve31) at (-5.7,-0.4) {};
		\node[draw,rectangle] (ve32) at (-5.7,0) {};
		\node[draw,rectangle] (ve33) at (-5.7,0.4) {};\node[draw,rectangle] (ue11) at (1,0.5) {};
		\node[draw,rectangle] (ue12) at (1,0.9) {};
		\node[draw,rectangle] (ue13) at (1,1.3) {};
		\node[draw,rectangle] (ue21) at (4.3,-0.75) {};
		\node[draw,rectangle] (ue22) at (3.9,-0.75) {};
		\node[draw,rectangle] (ue23) at (3.5,-0.75) {};
		\node[draw,rectangle] (ue31) at (5.7,-0.4) {};
		\node[draw,rectangle] (ue32) at (5.7,0) {};
		\node[draw,rectangle] (ue33) at (5.7,0.4) {};
		\node[draw,rectangle] (hu11) at (4.8,2.6) {};
		\node[draw,rectangle] (hu12) at (5.2,2.6) {};
		\node[draw,rectangle] (hu21) at (4.8,4.7) {};
		\node[draw,rectangle] (hu22) at (5.2,4.7) {};
		\node[draw,rectangle] (hv11) at (-4.8,2.6) {};
		\node[draw,rectangle] (hv12) at (-5.2,2.6) {};
		\node[draw,rectangle] (hv21) at (-4.8,4.7) {};
		\node[draw,rectangle] (hv22) at (-5.2,4.7) {};
		\node[draw,rectangle] (e') at (0,-0.5) {};

		\node[draw] (e) at (0,0) {};
		\node[draw,label={below:$e_v$}] (ev) at (-1,0) {};
		\node[draw,label={below:$e_u$}] (eu) at (1,0) {};
		\node[draw,label={below:$u_e^1$}] (ue1) at (2,0) {};
		\node[draw,label={below:$u_e^{2}$}] (ue2) at (3,0) {};
		\node[draw] (uewe) at (5,0) {};
		\node[draw,label={below:$v_e^{1}$}] (ve1) at (-2,0) {};
		\node[draw,label={below:$v_e^{2}$}] (ve2) at (-3,0) {};
		\node[draw] (vewe) at (-5,0) {};
		\node[draw,label={left:$v_e^{-1}$}] (ve-1) at (-5,2) {};
		\node[draw,label={left:$v_e^{-w(e)}$}] (ve-we) at (-5,1) {};
		\node[draw,label={right:$u_e^{-1}$}] (ue-1) at (5,2) {};
		\node[draw,label={right:$u_e^{-w(e)}$}] (ue-we) at (5,1) {};
		\node[draw,label={right:$h_u^{1}$}] (hu1) at (5,3.2) {};
		\node[draw,label={right:$h_u^{2r-2}$}] (hu2r-2) at (5,4.2) {};
		\node[draw,label={left:$h_v^{1}$}] (hv1) at (-5,3.2) {};
		\node[draw,label={left:$h_v^{2r-2}$}] (hv2r-2) at (-5,4.2) {};

		\node[draw,label={above:$u$}] (u) at (3,1) {};
		\node[draw,label={above:$v$}] (v) at (-3,1) {};
		\node[draw,label={above:$q$}] (q) at (0,3) {};
		\node[draw,rectangle,label={above:$q^1$}] (q1) at (-0.75,4) {};
		\node[draw,rectangle,label={above:$q^z$}] (qz) at (0.75,4) {};

		\node[] (dots1) at (4,0) {$\cdots$};
		\node[] (dots2) at (-4,0) {$\cdots$};
		\node[] (dots3) at (-5,1.6) {$\vdots$};
		\node[] (dots4) at (5,1.6) {$\vdots$};
		\node[] (dots5) at (-5,3.8) {$\vdots$};
		\node[] (dots6) at (5,3.8) {$\vdots$};
		\node[] (dots7) at (0,4) {$\cdots$};
		\node[] (elabel) at (0.3,0.3) {$e$};
		\node[rectangle] (vewelabel) at (5,-0.4) {$u_e^{w(e)}$};
		\node[rectangle] (uewelabel) at (-5,-0.4) {$v_e^{w(e)}$};

		\path (v) edge[-] (q);
		\path (u) edge[-] (q);
		\path (e) edge[-] (q);
		\path (q1) edge[-] (q);
		\path (qz) edge[-] (q);
		\path (q) edge[-] (hv1);
		\path (q) edge[-] (hv2r-2);
		\path (q) edge[-] (hu1);
		\path (q) edge[-] (hu2r-2);

		\path (e) edge[-] (ev);
		\path (e) edge[-] (eu);
		\path (ve1) edge[-] (ev);
		\path (ue1) edge[-] (eu);

		\path (v) edge[-] (ve1);
		\path (v) edge[-] (ve2);
		\path (v) edge[-] (vewe);
		\path (v) edge[-] (hv1);
		\path (v) edge[-] (hv2r-2);
		\path (v) edge[-] (ve-1);
		\path (v) edge[-] (ve-we);
		\path (ve2) edge[-] (ve1);
		\path (ve2) edge[-] (dots2);
		\path (vewe) edge[-] (dots2);

		\path (ve1) edge[-] (ve11);
		\path (ve1) edge[-] (ve12);
		\path (ve1) edge[-] (ve13);
		\path (ve2) edge[-] (ve21);
		\path (ve2) edge[-] (ve22);
		\path (ve2) edge[-] (ve23);
		\path (vewe) edge[-] (ve31);
		\path (vewe) edge[-] (ve32);
		\path (vewe) edge[-] (ve33);
		\path (ue1) edge[-] (ue11);
		\path (ue1) edge[-] (ue12);
		\path (ue1) edge[-] (ue13);
		\path (ue2) edge[-] (ue21);
		\path (ue2) edge[-] (ue22);
		\path (ue2) edge[-] (ue23);
		\path (uewe) edge[-] (ue31);
		\path (uewe) edge[-] (ue32);
		\path (uewe) edge[-] (ue33);

		\path (u) edge[-] (ue1);
		\path (u) edge[-] (ue2);
		\path (u) edge[-] (uewe);
		\path (u) edge[-] (ue-1);
		\path (u) edge[-] (ue-we);
		\path (u) edge[-] (hu1);
		\path (u) edge[-] (hu2r-2);
		\path (ue2) edge[-] (ue1);
		\path (ue2) edge[-] (dots1);
		\path (uewe) edge[-] (dots1);

		\path (hu21) edge[-] (hu2r-2);
		\path (hu22) edge[-] (hu2r-2);
		\path (hu11) edge[-] (hu1);
		\path (hu12) edge[-] (hu1);
		\path (hv21) edge[-] (hv2r-2);
		\path (hv22) edge[-] (hv2r-2);
		\path (hv11) edge[-] (hv1);
		\path (hv12) edge[-] (hv1);
		\path (e) edge[-] (e');
	\end{tikzpicture}
	\caption{ \label{fig:twAIMDA_W1_hard} Construction of \autoref{lem:twAIMDA_W1_hard} for an edge $\{v,u\}=e\in E$.}
\end{figure}

\begin{lemma}\label{lem:twAIMDA_W1_hard}
	$\twAnIMDefAll$ is \W{1}-hard even on graphs with degeneracy~2.
\end{lemma}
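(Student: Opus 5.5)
We give a parameterized reduction from \MinMaxOut, parameterized by treewidth, to $\twAnIMDefAll$, following the gadget of \autoref{fig:twAIMDA_W1_hard}. Let $(G=(V,E),w,r)$ be the instance. We build $G'$ as follows.

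\emph{Vertex gadgets.} For every $v\in V$ we keep a vertex $v$. We attach to it $2r-2$ helper vertices $h_v^1,\dots,h_v^{2r-2}$, each adjacent to $v$, to a central vertex $q$, and to two pendant vertices.

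\emph{Edge gadgets.} For every edge $e=\{u,v\}$ we add a vertex $e$ with a pendant neighbour and neighbours $e_u,e_v,q$. For each endpoint $v$ we add a path $e_v,v_e^1,\dots,v_e^{w(e)}$. Every $v_e^k$ with $k\ge1$ is adjacent to $v$ and carries three pendant vertices. We also add $w(e)$ extra neighbours $v_e^{-1},\dots,v_e^{-w(e)}$ of $v$.

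\emph{Central vertex.} Finally, $q$ is adjacent to all vertices of $V$, to all edge vertices $e$, to all helpers, and to $z$ pendant vertices $q^1,\dots,q^z$. The number $z$ is chosen so that $q$ is defended exactly when all its non-pendant neighbours are in the alliance.

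\emph{Parameter and degeneracy.} Deleting $q$ leaves a graph whose treewidth is bounded in terms of $\tw(G)$. Each path $v_e^1\cdots v_e^{w(e)}$ together with $v$ forms a fan of treewidth $2$ hanging off $v$, and $e,e_u,e_v$ hang between the two fans. So $\tw(G')\le \tw(G)+O(1)$. Since $w$ is given in unary, $G'$ has polynomial size. For degeneracy $2$, peel vertices in this order: pendants, then the $v_e^k$ from the far end, then $e_u,e_v,e$, the helpers, the $v_e^{-k}$, then the vertices of $V$ (at that point each is adjacent only to $q$), and finally $q$.

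\emph{The list $L$.} It consists of all singleton pendant sets. It also contains the ``local'' minimal alliances that do not use $q$, the analogue of $T$ in \autoref{thm:another_def_all_max_deg_6}. For example, I expect $\{e_v\}\cup\{v_e^1,\dots,v_e^{w(e)}\}\cup\{v\}$-type sets to be alliances, because each $v_e^k$ has three pendant neighbours against three gadget neighbours. The list must be polynomial, so the degrees have to be tuned to make only polynomially many such sets minimal. The natural tuning: $v$ can be defended only if $q$ and "enough" of its $v_e^k$ paths lie in $D$.

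\emph{Intended correspondence.} Orienting $e$ away from $v$ means the path $v_e^1\dots v_e^{w(e)}$ is absent from the alliance on $v$'s side, and it lies in the alliance on the other side.
\begin{itemize}
\item The vertex $e$ needs exactly one of $e_u,e_v$, since $e$ has neighbours $q$, a pendant, $e_u$ and $e_v$, and must have $q$ plus one of the two.
\item The path vertices force all-or-nothing inclusion through the three pendants each.
\item The degree of $v$ is $1+2(r-1)+2\sum_{e\ni v}w(e)$. The alliance contains $q$, all helpers, and the in-paths and $v_e^{-k}$ vertices. The count then works out so that $v$ is defended exactly when the weight of its outgoing edges is at most $r$.
\end{itemize}
This is the calibration I would verify by direct computation, adjusting $2r-2$ and the $v_e^{-k}$ as needed.

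\emph{Correctness, "if" direction.} From a valid orientation, take $q$, all vertices of $V$, all helpers, all $e$, and the chosen half-gadgets. Check that every vertex is defended. Then shrink to an inclusion-minimal subset, which exists and is computable by Proposition~5 of~\cite{BazFerTuz2019}. Argue that this subset still contains $q$, hence is not in $L$.

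\emph{Correctness, "only if" direction.} This is the main obstacle. Let $D\notin L$ be minimal. First show that $D$ must contain $q$; otherwise $D$ lives inside a single local gadget and belongs to $L$, as in the only-if part of \autoref{thm:another_def_all_max_deg_6}. Next, $q\in D$ forces all non-pendant neighbours of $q$ into $D$. This gives every $v$ and every $e$, hence for each $e$ exactly one side, by minimality and the defence of $e$. The defence of $v$ then yields out-weight at most $r$.

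The delicate points are the following:
\begin{itemize}
\item ruling out minimal alliances that avoid $q$ but are not on $L$, which requires an exhaustive yet polynomial description of the local alliances;
\item ensuring minimality does not drop both $e_u$ and $e_v$, or take both.
\end{itemize}
I would first try to make the local gadgets so rigid that the only $q$-free minimal alliances are pendants. If that fails, I would list the few path alliances explicitly in $L$.
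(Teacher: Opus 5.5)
There is a genuine gap. Your gadget is the paper's own (Figure~\ref{fig:twAIMDA_W1_hard}), and your treewidth and degeneracy arguments are fine. The problem is the list $L$, and with it the only-if direction.

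\textbf{The list $L$.} The set $\{e_v,v_e^1,\dots,v_e^{w(e)},v\}$ that you expect to be a local alliance is not a defensive alliance at all. Inside it, $v$ has only $w(e)$ neighbours, while its degree is $2r-1+2W(v)$ with $W(v)=\sum_{e'\ni v}w(e')\ge w(e)$. Defending $v$ would therefore need $r\le 1$. Meanwhile you miss the one genuine $q$-free minimal alliance, namely $\{e,e_u,e_v\}$. Here $e$ has two of its four neighbours inside, each of $e_u,e_v$ has one of its two inside, and no proper subset is an alliance. So your ``first try'', with only pendant singletons on $L$, maps every \MinMaxOut instance to a \yes-instance.

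\textbf{The missing idea.} Take $L=\{\{p\}\mid p\in P\}\cup\{\{e,e_u,e_v\}\mid e\in E\}$. Then prove $q\in D$ for every minimal $D\notin L$ by local propagation, rather than by cataloguing local alliances:
\begin{itemize}
\item $D\cap P=\emptyset$, since a pendant in $D$ would be a singleton alliance.
\item A helper in $D$ needs both $v$ and $q$.
\item Any $v_e^j\in D$ needs all its non-pendant neighbours, hence the whole path, $e_v$ and $v$.
\item If $v\in D$ but $q\notin D$, all helpers are excluded, so $\deg_D(v)+1\le W(v)+1<2r-1+W(v)\le\deg_{\overline D}(v)$.
\item $e\in D$ needs two of $q,e_u,e_v$. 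If $q\notin D$, then $\{e,e_u,e_v\}\subseteq D$, so minimality gives $D=\{e,e_u,e_v\}\in L$.
\item $e_v\in D$ needs $v_e^1$ or $e$, and both force $q$.
\end{itemize}
Hence $q\in D$, and then $N[q]\setminus P\subseteq D$. The same minimality argument also gives your ``exactly one of $e_u,e_v$''. The defence of $e$ alone only gives at least one; ``not both'' holds because $D$ cannot strictly contain the alliance $\{e,e_u,e_v\}$.

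\textbf{Calibration of $v$.} As written, you place the $v_e^{-k}$ inside the alliance; then $v$ would always be defended. They must be pendants kept outside $D$. Let $\mathrm{in}(v)$ and $\mathrm{out}(v)$ be the total weight of edges oriented into and out of $v$, where the in-edges are those whose path on $v$'s side lies in $D$. For $D$ containing $q$, all helpers and exactly the in-paths, we get $\deg_D(v)+1=2r+\mathrm{in}(v)$ and $\deg_{\overline D}(v)=\mathrm{in}(v)+2\,\mathrm{out}(v)$. This is exactly $\mathrm{out}(v)\le r$ with the given numbers $2r-2$ and $w(e)$, so no tuning is needed. In the only-if direction, the first quantity is only an upper bound, which suffices.

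\textbf{If direction.} This is simpler than you plan. The alliance $D_O$ built from an orientation contains no pendant and only one of $e_u,e_v$ per edge, so no member of $L$ is a subset of $D_O$. Therefore any inclusion-minimal alliance contained in $D_O$ is automatically not in $L$. Mere existence suffices, nothing has to be computed, and you need not show that it still contains $q$.
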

\begin{proof}
	We use parts of the construction of \autoref{thm:another_def_all_max_deg_6}.
	Let $G=(V,E)$ be a graph and $r\in \mathbb{N}^+$. \longversion{Observe that the problem}\shortversion{\MinMaxOut} is polynomial-time solvable for $r\in \{1,2\}$\longversion{ (guess and check the outgoing edges for every vertex). Therefor}\shortversion{. Henc}e, we can assume $r\geq 3$. Define $G'=(V',E')$ as follows\shortversion{; see \autoref{fig:twAIMDA_W1_hard}}.
	\begin{itemize}
		\item Make a copy of all vertices in $V$, keeping their names.
		\item For each $v\in V$ and $i\in [2r-2]$, introduce three vertices $h^i_v,h^i_v{}',h^i_v{}''$.
		\item For all $ v\in e\in E, j\in [w(e)]$, add the vertices $v^{j}_e,v^{j}_{e}{}' ,v^{j}_e{}'' ,v^{j}_e{}''', v^{-j}_e , e,e_v, e'$.
		\item Introduce $q$, $q^1,\ldots,q^z$, for $z={\vert V\vert(2r-1)+ \vert E\vert }$.
		\item Connect $q$ to $q^{\ell},v,h^i_v,e$ by an edge for $\ell\in [z], v\in V,  i\in [2r-2], e\in E$.
		\item For each $v\in e\in E$ and $j\in [w(e)]$, introduce the edges \[\{v,v^j_e\},\{v,v^{-j}_e\},\{v^j_e,v^j_e{}'\},\{v^j_e,v^j_e{}''\},\{v^j_e,v^j_e{}'''\}, \{v^1_{e},e_v\}, \{{e},e_v\}, \{{e},{e}'\},\] as well as $\{v_e^i,v_e^{i+1}\}$ for $i\in[w(e)-1]$.
		\item For all $v\in V$ and $i\in [2r-2]$, introduce the edges $\{v,h_v^i\},\{h_v^i,h_v^i{}'\},\{h_v^i,h_v^i{}''\}$.
	\end{itemize}
	\longversion{The construction is visualized in \autoref{fig:twAIMDA_W1_hard}.}
	\shortversion{\begin{claim}\shortversion{$(\ast)$}
			$\tw(G')\leq \tw(G)+5$.
		\end{claim}}
	\begin{toappendix}
		\shortversion{\subsection{Details of the proof of \autoref{lem:twAIMDA_W1_hard}}}\begin{claim}
			$\tw(G')\leq \tw(G)+5$.
		\end{claim}
		\begin{pfclaim}
			Let $\mathcal{T}=(T=(V_T,E_T),\{X_t\}_{t\in V_T})$ be a tree decomposition of $G$ such that $\tw(G)=\max_{t\in V_T} \vert X_t\vert-1$. In the following, we will explain how to construct a tree decomposition $\mathcal{T}'=(T'=(V_{T'},E_{T'}),\{X'_t\}_{t\in V_{T'}})$ for $G'$ of treewidth $\tw(G)+5=\max_{t\in V_{T'}} \vert X'_t\vert -1$.
			\begin{itemize}
				\item We start with $\mathcal{T}'=(T'=(V_{T'},E_{T'}),\{X'_t\}_{t\in V_{T'}})\coloneqq(T=(V_T,E_T),\{X_t\cup \{q\}\}_{t\in V_{T}})$. We will gradually modify this initial tree in the following.
				\item For $v\in V$, there is a $t_v\in V_{T}$ with $v\in X_{t_v}$. Add $t_v^{\ell}$ for all $\ell\in [2r+1]$, with the edge $\{t_v,t_v^{\ell}\}$ and $X_{t_v^{\ell}} = X_{t_v}\cup \{q,h^{\ell}_v,h^{\ell}_v{}'\}$.
				\item Take an arbitrary $t\in V_T$ and add a path $t,t_1,\ldots,t_{\vert E\vert + \vert V\vert(2r-1)}$ with $X'_{t_i}\coloneqq\{q,q_i\}$.
				\item Let $\{v,u\}=e\in E$. There is a $t_e\in V_T$  with $v,u\in X_{t_e}$. For $x\in e$, $j\in [w(e)]$, and $k\in [w(e)-1]$ add $t_e',t_e^x,t_e^{x,j},t_e^{x,j}{}'$ to $V_{T'}$ with $X'_{t_e'}=X_{t_e} \cup \{q,e,e',e_v,e_u\},X'_{t_e^x}=X_{t_e} \cup \{q,e_x,x_e^1\},X'_{t_e^{x,k}}=X_{t_e} \cup \{q,v_e^k,v_e^{k+1}\},X'_{t_e^{x,w(e)}}=X_{t_e} \cup \{q,v_e^{w(e)}\}, $ and $X'_{t_e^{x,j}{}'}=X_{t_e} \cup \{q,v_e^k,v_e^k{}',v_e^k{}'',v_e^k{}'''\}$,  together with the edges $\{t_e,t_e'\},\{t_e^x,t_e'\},\{t_e^x,t_e^{x,1}\},\{t_e^{x,j},t_e^{x,j}{}'\},\{t_e^{x,k},t_e^{x,k+1}\}.$
			\end{itemize}
		\end{pfclaim}
	\end{toappendix}
	By $P=\{h_{v}^i{}',h_{v}^i{}''\mid v\in V, \ell \in [2r-2]\} \cup \{e',v_e^j{}',v_e^j{}'',v_e^j{}''', v_e^{-j}\mid v\in e\in E, j\in [w(e)]\} \cup \{q^{\ell}\mid \ell \in [\vert V\vert (2r-1)+ \vert E\vert ]\}$, denote the set of pendant vertices. Observe that, for all $e\in E$, $w(e)\leq r$. Otherwise, it would be a trivial \no-instance. Hence, there are at most $\vert V\vert (1+ r (2\deg(v)+1))+ 3\vert E\vert + 1 + \vert V  \vert (2r-1)+\vert E\vert$ vertices. As we can assume that $r\leq \sum_{e\in E}w(e)$ (otherwise, it is a trivial \yes-instance) and \shortversion{by the unary number encodings}\longversion{the weight function is given in unary}, $G'$ can be constructed in polynomial time.
	\shortversion{\begin{claim}\shortversion{$(\ast)$}
			$G'$ is 2-degenerate.
		\end{claim}}

	\begin{toappendix}
		\begin{claim}
			$G'$ is 2-degenerate.
		\end{claim}
		\begin{pfclaim}
			To see that $G'$ is 2-degenerate, consider the following deletion ordering:
			\begin{itemize}
				\item Delete all vertices in $P$ (for all $p\in P$, $\deg(p)=1$) and for each $\{v,u\} = e\in E$ delete $e_v,e_u$ ($N(e_v)=\{v^1_e,e\}$ and $N(e_u)=\{u^1_e,e\}$).
				\item Now $e\in E$ has only the neighbor $q$. Therefore, we can delete $e$ from $G'$.
				\item Observe that now, for all $v\in e\in E$, $v_e^{w(e)}$ has only the neighbors $v,v_e^{w(e)-1}$. Hence, we can delete $v_e^{w(e)}$. After doing this, $v_e^{w(e)-1}$ has two neighbors. By an inductive argument, we can delete all vertices $v_e^{w(e)},\ldots,v_e^1$.
				\item For each $v\in V$ and $i\in [2r-2]$, after deleting $h_v^i{}'$, $N(h_v^i)=\{q,v\}$. Thus, we can delete $h_v^i$.
				\item This leaves us with the graph $(\{q,v\mid v\in V\}, \{\{q,v\}\mid v\in V\})$, which is a star and therefore 1-degenerate.
			\end{itemize}
		\end{pfclaim}
	\end{toappendix}

	Define the list of given inclusion wise minimal defensive alliances  \[L\coloneqq\{\{p\}\mid p\in P\} \cup \{\{e,e_v,e_u\}\mid \{v,u\}=e\in E\}. \]
	We will shortly explain the idea. To this end, we compare it with the construction of \autoref{thm:another_def_all_max_deg_6}. The idea of $q$ in $G'$ is the same as the idea of the vertices $u_1,\ldots,u_{2n+4m}$: If one of these vertices is a part of the solution, we have to include some other specific vertices. Furthermore, for $\{v,u\}=e\in E$, $e_v,e_u,e$ work like $z_i,y_i,\overline{y}_i$: First observe that $\{e,e_v,e_u\}$ is an inclusion minimal defensive alliance. By adding these defensive alliances to the list of given solutions, we ensure that if $e$ (resp. $z_i$) belongs to a defensive alliance, then we have to add  either $e_v$ or $e_u$ (resp. $y_i$ or $\overline{y}_i$), but not both.  By the degree and the number of pendant neighbors, if $v_e^j$ (for $j\in [w(e)]$) is included in a solution, $N(v_e^1)\setminus P$ also has to be in the solution. This implies that we cannot add both $v_e^1$ and $u_e^1$ to a solution at the same time. This is important to understand how a solution of the  \textsc{AnIMDefAll} instance can be tranformed into a solution of the \textsc{MinMaxOut} instance: $v_e^1$ is in a solution $D$ \iffl $e$ is orientated as $(u,v)$. We will prove that the outdegree of $v$ is at most $r$ \iffl $\deg_D(v)+1\geq \deg_{\overline{D}}(v).$

	More details on this can be found in the proof of the following claim:

	\shortversion{\begin{claim}\shortversion{$(\ast)$}
			$(G,r)$ is a \yes-instance of \textsc{MinMaxOut} \iffl there is an inclusion minimal defensive alliance in $G'$ that is not in $L'$.\qed
		\end{claim}}

	\begin{toappendix}
		\begin{claim}
			$(G,r)$ is a \yes-instance of \textsc{MinMaxOut} \iffl there is an inclusion minimal defensive alliance in $G'$ that is not in $L'$.
		\end{claim}
		\begin{pfclaim}
			Let $O=(V,E_O)$ be a solution for the \textsc{MinMaxOut} instance $(G,r)$. Define
			\begin{equation*}
				\begin{split}
					D_O=\{q\} \cup V\cup E \cup \{h_v^i\mid v\in V, i\in [2r-2]\} \cup \left(\bigcup_{(u,v)\in E_O}\{e_v,v_e^j \mid j\in [w(e)]\} \right).
				\end{split}
			\end{equation*}
			Observe that no set in $L$ is a subset of $D$. Hence, it is enough to prove that $D_O$ is a defensive alliance. First, for any $e\in E$, $\deg_{D_O}(e)+1 =3>2=\deg_{\overline{D}}(e)$ and \[\deg_{D_O}(q)+1=\vert V\vert +\vert E\vert +\vert V\vert (2r-2)+1>  \vert E\vert +\vert V\vert (2r-1)=\deg_{\overline{D_O}}(q).\] Let $v\in V$. For $(u,v)\in E_O$ and $j\in [w(e)]$,  $N(e_v) \subseteq D_O$ and $\deg_{\overline{D_O}}(v_{e}^j)=3\leq\deg_{D_O}(v_e^j)$.
			For $i\in [2r-2]$, $\deg_{D_O}(h_v^i) +1= 3>2=\deg_{\overline{D_O}}(h_v^i)$. This leaves us to consider~$v$:
			\begin{equation*}
				\begin{split}
					\deg_{D_O}(v)+1 & = 2r-2+1+1+\sum_{(u,v)\in E_O} w(\{u,v\})                                                         \\
					                & \geq 2 \sum_{(v,u)\in E_O} w(\{u,v\}) + \sum_{(u,v)\in E_O} w(\{u,v\})= \deg_{\overline{D_O}}(v).
				\end{split}
			\end{equation*}
			Hence, $D$ is a defensive alliance and there is an inclusion minimal defensive alliance $D'\subseteq D$ not included in $L$.

			Conversely, let $D$ be an inclusion minimal defensive alliance with $D\notin L$. Hence $P\cap D= \emptyset.$ If $q\in D$, $\deg_P(q) = \vert E \vert + \vert V\vert (2r-1) = \deg(q)$ implies $N[q]\setminus P \subseteq D$. Analogously, for all $v\in V$ and $x\in \{h_v^i\mid i\in [2r-2]\} \cup \{v_e^{j}\mid v\in e\in E, j\in [w(e)]\}$, $x\in D$ would imply $N[x]\setminus P \subseteq D$. By an inductive argument, for $v\in e \in E$, if $D\cap \{v_e^j\mid j\in [w(e)]\}\neq \emptyset$, then $ \{v,e_v\}\cup \{v_e^j\mid j\in [w(e)]\} \subseteq D$. Let $\{v,u\}=e\in E$. Since $N(e)= \{q,e',e_v, e_u\}\subseteq \{q,e_v, e_u\}\cup P$, at least two of the vertices of $q,e_v,e_u$ must be in $D$. The case $\{e_v,e_u\}\subseteq D$ is impossible as otherwise, $\{e,e_v,e_u\}\in L $ would be a subset of $D$. Therefore, $e\in D$ implies $q\in D$ and so $N[q]\setminus P\subseteq D$.

			Assume there is a $v\in V\cap D$ but $q\notin D$. Since $q\in N(h_v^i)\setminus P$ for each $i\in [2r-2]$, $\{h_v^i\mid i\in [2r-2]\}\cap D= \emptyset$. Thus, \[\deg_{\overline{D}}(v) \geq 2r-2+ \sum_{v\in e\in E} w(e)> \sum_{v\in e\in E} w(e)\geq \deg_{D}(v).\] This would contradict the fact that $D$ is a defensive alliance. Therefore, $V\cap D\neq \emptyset$ implies $q\in D$.
			Let $v\in e\in E$. If $e_v\in D$ then $v_e^1\in D$\longversion{ (which implies $v\in D$)} or $e\in D$. Both cases imply $q\in D$.  In summary: if there is an inclusion minimal defensive alliance in $G'$ that is not in $L$, then $q\in D$.

			Define the orientation $O_D=(V,E_D)$ with $E_D\coloneqq \{(u,v)\mid e = \{u,v\} \in E , e_v\in D\}$. Let $e=\{v,u\}\in E$. Since $q\in D$, $e\in E\subseteq  N[q] \setminus P$ is in $D$. Thus, either $e_v$ or $e_u$ have to be in the solution. Hence, $O_D$ is an orientation of $G$. If there is a $v\in V$ with $\deg^-_{O_D}(v)>r$, $D$ would not be a defensive alliance since:
			\begin{equation*}
				\begin{split}
					    & \deg_D(v) +1 -\deg_{\overline{D}}(v)                                                                                                                  \\
					={} & 1  + 2r-2  + \left(\sum_{(u,v) \in E_D}  w(\{u,v\})\right) +1 - \left(\sum_{(v,u) \in E_D}  w(\{u,v\})\right) -\left(\sum_{v\in e \in E}  w(e)\right) \\
					={} & 2r  - 2\left(\sum_{(v,u) \in E_D}  w(\{u,v\})\right)<2r-2r=0
				\end{split}
			\end{equation*}
			This completes the proof of the claim\longversion{ and hence of the theorem}.
		\end{pfclaim}
	\end{toappendix}
	\longversion{\qed}
\end{proof}

\noindent
The following can now be argued similar to \autoref{prop:from-enumeration-to-decision}.

\begin{thmrep}\applabel{thm:no_tw_FPT_enum}
	There is no \OutputFPT algorithm enumerating inclusion minimal defensive alliances when parameterized by treewidth unless $\FPT\neq \W{1}$, even on graphs of degeneracy~2.
\end{thmrep}

\begin{proof}
	Assume there is an $\OutputFPT$ algorithm $\mathcal{A}$ enumerating all inclusion minimal defensive alliances parameterized by $\tw$ in $f(\tw) ( n + N ) ^c$ (where $f:\mathbb{N}\to \mathbb{N}$ is a computable function, $n$ is the size of the input, $N$ is the number of outputs, and $c\in \mathbb{N}$).

	Let $G$ be a graph and $L$ be a list of inclusion minimal defensive alliances of~$G$. The idea is to run the algorithm~$\mathcal{A}$ for $f(\tw) (n+\vert L\vert +2)^c$ steps. There are two cases:
	\begin{enumerate}
		\item The algorithm $\mathcal{A}$ still runs. Thus, $\mathcal{A}$ generated $\vert L \vert + 1$ outputs and $(G,L)$ is a \yes-instance of \textsc{AnIMDefAll}.
		\item The algorithm stops before $f(\tw) (n+\vert L\vert +2)^c$ steps. In this case, we compare all outputs with the list $L$. If there is one output not in $L$, then it is a \yes-instance. Otherwise, $\mathcal{A}$ enumerated all inclusion minimal defensive alliances of $G$ and each was in $L$. Therefore, it would be a \no-instance.
	\end{enumerate}
	Observe that the size of each inclusion minimal defensive alliance is bounded by $n$. Hence the running time is bounded by  $f(\tw) (n+L+2)^{(c+2)}$, which is \FPT. By \autoref{lem:twAIMDA_W1_hard}, such an algorithm cannot exist unless $\FPT=\W{1}$.\qed
\end{proof}

\begin{remark}
	There are problems for which it is known that there is no \OutputFPT algorithm unless $\FPT= \W{1}$. For these problems, it is often $\W{1}$-hard to find the first solution (for example \textsc{MaxOnes-SAT} \cite{CreMMSV2017}). \shortversion{As far as we know}\longversion{To the best of our knowledge}, \shortversion{this}\longversion{\autoref{lem:twAIMDA_W1_hard}} is the first time a parameterized version of \textsc{Another}-problem variation was used\longversion{ to obtain such a result}.
\end{remark}

We can \longversion{actually }sharpen the previous considerations by using the parameter pathwidth instead of treewidth. \shortversion{This is the minimum width of a tree decomposition when the tree is a path.}
\begin{toappendix}
	\begin{definition} A \emph{path decomposition} of a graph $G$ is a tuple $\mathcal{P}=(X_1,\ldots,X_p)$, where for each $i\in [p]$, there is a is assigned a vertex subset $X_t\subseteq V(G)$, called a bag, satisfying the following:
		\begin{itemize}
			\item [1)] $\bigcup_{i\in [p]}X_i=V(G)$;
			\item [2)] for each edge $uv\in E(G)$, there is some\longversion{ index} $i\in [p]$ such that $u\in X_i, v\in X_i$;
			\item [3)] For all $i,j,k\in [p]$ with $i\leq j\leq k$, $X_i \cap X_k \subseteq X_j$.
		\end{itemize}
		The \emph{width} of a path decomposition $\mathcal{P}=(X_1,\ldots,X_p)$ is $\max_{i\in p}\vert X_i \vert - 1$. The \emph{pathwidth} of a graph $G$ is minimum width of a path decomposition on $G$ and is denoted by $\pw(G)$.
	\end{definition}
\end{toappendix}
\longversion{To this end}\shortversion{For this}, we \longversion{have to }re-analyze some of the previous\shortversion{ arguments}\longversion{ly presented reductions and algorithms}.
\longversion{We will sketch in the following why our claims hold.}

\begin{thmrep}\applabel{thm:pw-MinMaxOut}
	\pw-\MinMaxOut is \W{1}-hard.
\end{thmrep}

\begin{proof}
	(Sketch)
	Szeider has shown that   \textsc{MinMaxOut} is \W{1}-hard when parameterized by treewidth by giving a reduction from \textsc{Multicolored Clique}, or \textsc{MC}. We now describe a more detailed analysis of his proof.  In the graph~$G$ that he constructs from the \textsc{MC} instance that consists of $k$ color classes, two types of small trees emerge after deleting the $2\binom{k}{2}$ vertices of the form $b_{i,i'}$ and  $c_{i,i'}$ for $i,i'\in [k]$, $i<i'$, that form the set~$K$:
	\begin{itemize}
		\item For each color~$i\in[k]$, there is a vertex $a_i$ that is connected to vertices $u_i^j$, $j\in [n_i]$ representing the $n_i$ vertices in color class~$i$; moreover, each $u_i^j$ is adjacent to $x_i^j$ and to $y_i^j$.
		\item For each pair $(i,i')$ of color classes, with $i<i'$, there is a  vertex $d_{i,i'}$.
		      For each edge $\{q,q'\}$ of the $\textsc{MC}$ instance, where $q$ is from color class~$i$ and $q'$ is from color class~$i'$, there is an edge to $d_{i,i'}$.
	\end{itemize}
	These trees all have a very small height when rooted at $a_i$ or $d_{i,i'}$, respectively.
	Hence, we can build a path decomposition of~$G$ of width $2\binom{k}{2}+2$, by walking through the trees one by one. More precisely, the first bag would consist of $K\cup\{a_1,u_1^1,x_1^1\}$, the second of $K\cup\{a_1,u_1^1,y_1^1\}$, the third of $K\cup\{a_1,u_1^2,x_1^2\}$, \dots  \qed
\end{proof}

Similarly, we can observe that when we assume that we start with a path decomposition for the \MinMaxOut instance, then the tree decomposition resulting from the proof of \autoref{lem:twAIMDA_W1_hard} is indeed a caterpillar that can be again converted into a path decomposition by ``intercalating'' the leaf vertices of that decomposition into a sequence; more precisely, if $B$ is a bag and $B_1,\dots,B_k$ are leaf bags connected to~$B$, then the sequence $B,B_1,\dots,B_k$
can replace that part as $B\subset B_i$ for all~$i$.  We can hence strengthen \autoref{thm:no_tw_FPT_enum} towards a hardness result concerning path decompositions. This reasoning leads us to:

\begin{theorem}\label{thm:no_pw_FPT_enum}
	There is no \OutputFPT algorithm enumerating inclusion minimal defensive alliances when parameterized by pathwidth unless $\FPT=\W{1}$, even on graphs of degeneracy~2.
\end{theorem}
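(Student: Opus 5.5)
The plan mirrors the proof of \autoref{thm:no_tw_FPT_enum}, with treewidth replaced by pathwidth at each step. We chain three ingredients. First, \autoref{thm:pw-MinMaxOut} gives \W{1}-hardness of \MinMaxOut parameterized by pathwidth. Second, the reduction of \autoref{lem:twAIMDA_W1_hard}, which yields a graph of degeneracy~2, must keep pathwidth bounded by a function of the input pathwidth. Third, the output-sensitive-to-decision argument of \autoref{prop:from-enumeration-to-decision} / \autoref{thm:no_tw_FPT_enum} transfers an \OutputFPT enumeration algorithm into an \FPT algorithm for $\pw$-\AnIMDefAll.

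For the second step, start from a path decomposition $(X_1,\dots,X_p)$ of $G$ of width $\pw(G)$ and run the tree-decomposition construction from the claim inside \autoref{lem:twAIMDA_W1_hard}. Add $q$ to every bag. For each $v$ pick an index $t_v$ with $v\in X_{t_v}$; for each $e$ pick an index $t_e$ with $e\subseteq X_{t_e}$. Then attach the gadget bags. These are the bags for $h_v^i$ and their pendants; the bag containing $e,e',e_v,e_u$; and the chain of bags for $v_e^1,\dots,v_e^{w(e)}$ together with their pendant bags.

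This attachment is the main obstacle, because the $v_e^j$-chain is itself a path hanging off $t_e$, so the result is not literally a caterpillar. I would handle it by observing that every gadget bag is of the form $X_{t}\cup S$, where $S$ consists of new vertices appearing only in that gadget. Hence the whole gadget subtree can be linearized and inserted into the path right after $X_t$ as a consecutive block, each gadget bag being $X_t\cup S$. The interpolation property is preserved: vertices of $X_t$ occur in $X_t$ and in every inserted bag, while new vertices occur only in a contiguous stretch of the gadget's own linear order. That order is the chain order for $v_e^j$, with each pendant bag placed right after its chain bag and with $e,e_v,e_u,e'$ kept in one bag adjacent to the start of both chains. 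The $q^\ell$ bags $\{q,q^\ell\}$ go at the end. The width stays at most $\pw(G)+5$ or so, which is a function of the parameter.

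Finally, suppose an \OutputFPT algorithm running in time $f(\pw)(n+N)^c$ exists. We then decide $\pw$-\AnIMDefAll by simulating it for $f(\pw)(n+|L|+2)^c$ steps and comparing its outputs with $L$, exactly as in \autoref{thm:no_tw_FPT_enum}. Composing this with the parameterized reduction gives an \FPT algorithm for $\pw$-\MinMaxOut, whence $\FPT=\W{1}$. Degeneracy~2 is inherited from \autoref{lem:twAIMDA_W1_hard}.

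The simulation does not need a path decomposition, only a bound on the running time. If computing $\pw$ is an issue, we can supply the pathwidth bound produced by the reduction itself, since the reduction knows the decomposition it built.
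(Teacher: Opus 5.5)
Your proposal is correct and follows the paper's route. You use \W{1}-hardness of $\pw$-\MinMaxOut (\autoref{thm:pw-MinMaxOut}), turn the decomposition from \autoref{lem:twAIMDA_W1_hard} into a path decomposition of width $\pw(G)+\mathcal{O}(1)$, and apply the time-cutoff argument of \autoref{thm:no_tw_FPT_enum}. Your block-insertion argument is more careful than the paper's ``caterpillar'' remark, because the $v_e^j$-chains are not leaf bags and the pendant bags do not contain their parent bags.

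One small fix is needed in your ordering. Put the pendant bag of $v_e^j$ between the two chain bags that share $v_e^j$; placing it ``right after its chain bag'' on the forward-traversed chain would interrupt $v_e^{j+1}$. Also add bags $X_{t_v}\cup\{q,v_e^{-j}\}$ for the pendants $v_e^{-j}$ of $v$, which the paper's decomposition omits as well.
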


\section{Conclusions}

In this paper, we have contributed to the enumeration theory of alliance problems,
which exhibit certain features that separate them from most other graph parameters.
One of such properties was the non-monotonicity which cannot be expressed in monadic second order logic and therefore escapes many meta-theorems.
In fact, alliance theory is by now a quite developed area, as also certified by surveys and book chapters\longversion{; see}  \cite{FerRod2014a,HayHedHen2021,OuaSliTar2018,YerRod2017}.  There are many variations of these notions, including offensive and powerful alliances.
It would be interesting to look into these notions from the perspective of enumeration algorithms.

A future research direct would be to find an \XP-delay algorithm for enumerating inclusionwise minimal defensive alliances when parameterized by \tw{} or \pw. It could also be the case that there does not exist such an algorithm. Then we would like to find a proof for this. Considering other parameters like vertex integrity or twin-cover number would also be a interesting direction.

One of the standard techniques to obtain efficient enumeration algorithms is using so-called extension algorithms\longversion{; see \cite{BorGKM2000,CasFGMS2022,Mar2013a}}. This was also investigated in~\cite{FenFerMan2026}. We can interpret some of our reductions and algorithms also in this direction\shortversion{, as detailed in the appendix}.
\begin{toappendix}
	Let us first formally state the problem definition. Here, we go back to the notion of globally minimal alliances in order to stay consistent with the statement of the results in~\cite{FenFerMan2026} that we are now going to supplement.

	\problemdef{Extension Inclusion Minimal Defensive Alliance} {(\textsc{ExtIMDefAll})}{A graph $G=(V,E)$, and sets $U,N\subseteq V$.}{Is there a globally minimal defensive alliance $A\subseteq V\setminus N$ in $G$ with $U\subseteq A$?}

	\begin{theorem}\label{thm:NP_ExtGMAD}
		\ExtGMDefAll is \NP-complete, even on bipartite graphs of degree at most~6 and degeneracy~2.
	\end{theorem}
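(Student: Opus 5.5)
Membership in \NP{} is immediate: we guess $A$, check in polynomial time that $A\cap N=\emptyset$ and $U\subseteq A$, check that $A$ is a defensive alliance, and check inclusion minimality with Proposition~5 of~\cite{BazFerTuz2019}. For \NP-hardness I would reuse the graph $G$ built from a \textsc{Monotone 3-Sat-(2,2)} instance~$\phi$ in the proof of \autoref{thm:another_def_all_max_deg_6}. That theorem already shows $G$ is bipartite, has maximum degree~6 and degeneracy~2. So I only need to choose $U$ and $N$ so that the extension constraints play the role of the list~$L$ there.

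Concretely, I would set $U\coloneqq\{u_1\}$, or more safely all of $\{u_r\mid r\in[2n+4m]\}$, and $N\coloneqq P$, the set of pendant vertices. Putting $P$ into $N$ excludes the singleton alliances $\{v\}$, $v\in P$. Forcing $u_1\in A$ excludes the alliances in $T$ and the sets $\{z_i,y_i,\overline{y}_i\}$, since none of them contains a $u_r$. The key point is that the family of minimal alliances containing $u_1$ and avoiding $P$ should coincide with the minimal alliances of $G$ outside $L$ that contain the cycle $U$.

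The if-direction then follows the earlier construction. From a satisfying assignment~$\alpha$ we form the set $A$ as before: the whole cycle $U$, all of $C$, each $z_i$, and the literal side chosen by~$\alpha$. This $A$ is a defensive alliance avoiding $P$ and containing $u_1$. The subtlety is that here we need \emph{global} minimality of some alliance inside $A$ that still contains $u_1$, not just some minimal alliance outside $L$. I would argue as in the only-if part of the earlier claim. Any defensive alliance $B\subseteq A$ with $u_1\in B$ must contain all non-pendant neighbours of each $u_r$ in $B$, since $u_r$ has three pendant neighbours. Propagating around the cycle, $B$ contains all of $U$, all $z_i$ and all of $C$. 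So a minimal alliance $A'\subseteq A$ containing $u_1$ exists once we show that every alliance inside $A$ is either in $L$ or contains $u_1$. This holds because $A$ contains no member of $L$ (by the pruning of redundant literals and because at most one of $y_i,\overline{y}_i$ is present), and the earlier analysis shows every alliance avoiding $P$ that is not in $L$ contains the cycle.

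For the only-if direction, take a minimal $A$ with $u_1\in A$ and $A\cap P=\emptyset$. By the propagation just described, $U\cup C\cup\{z_i\mid i\in[n]\}\subseteq A$. Minimality forbids $\{z_i,y_i,\overline{y}_i\}\subseteq A$, because that set is a defensive alliance. The rest is verbatim the argument at the end of the earlier claim: the literals in $A$ define an assignment, each clause vertex needs a literal neighbour in $A$, and a negative literal $\overline{x}_i\in A$ together with $x_i\in A$ would force $y_i,\overline{y}_i\in A$, contradicting minimality.

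I expect the main obstacle to be the if-direction's minimality step. Global minimality is not monotone, so I must make sure the minimal sub-alliance of $A$ obtained by shrinking still contains $u_1$ rather than collapsing to some member of $L$. I would handle this by showing directly that no proper subset of $A$ avoiding $u_1$ is a defensive alliance. The only candidates are unions of components from $T$ and triples $\{z_i,y_i,\overline{y}_i\}$, and the literal-deletion step of the earlier construction rules both out.
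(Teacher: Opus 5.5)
Your proposal is correct and is essentially the paper's proof. You use the same graph as in \autoref{thm:another_def_all_max_deg_6} with the cycle $\{u_r\mid r\in[2n+4m]\}$ as pre-solution; taking only $u_1$ is equivalent by the propagation argument, and $N=P$ is harmless but unnecessary, since a minimal alliance containing $u_1$ has at least two vertices and so avoids all pendant vertices anyway. You also spell out \NP-membership and the if-direction minimality step, which the paper leaves implicit; note that the classification you invoke (every alliance avoiding $P$ and not in $L$ contains the cycle) holds only for inclusion-minimal alliances, as a union of two members of $T$ shows, but that restricted form is exactly what your argument needs.
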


	\begin{proof}
		Revisit the proof of \autoref{thm:another_def_all_max_deg_6}.
		We construct the same graph from a \textsc{Monotone 3-Sat-(2,2)} instance~$\phi$ as before. Now, we consider the pre-solution $U=\{u_r\mid r\in [2n+4m]\}$. According to the arguments given in \autoref{thm:another_def_all_max_deg_6}, $U$ is contained in an inclusion minimal defensive alliance \iffl $\phi$ is satisfiable.\qed
	\end{proof}

	One could be also interested in aspects of parameterized complexity, also see \cite{CasFGMS2022}. A standard parameter in this context is the size of the given pre-solution. This is also what we consider here. But one can also look at structural parameters.

	\begin{theorem}\label{thm:W1_tw_ExtGMAD}
		\longversion{The problem }\ExtGMDefAll is \paraNP-complete, when parameterized by the size of the pre-solution.
		\tw-\ExtGMDefAll is \W{1}-hard, even on graphs with degeneracy~2.
	\end{theorem}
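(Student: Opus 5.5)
Both claims of the statement are reductions that reuse constructions already given; in each, the pre-solution $U$ plays the role that the list $L$ played for \AnIMDefAll.

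\emph{First claim: \paraNP-completeness for the parameter $|U|$.} Membership is guess-and-check, as in \autoref{thm:NP_ExtGMAD}: guess $A$, then test that it is a defensive alliance and, by Proposition~5 of~\cite{BazFerTuz2019}, that it is inclusion minimal in polynomial time. For hardness it suffices to show \NP-hardness for a constant-size pre-solution. I would take the graph from the proof of \autoref{thm:another_def_all_max_deg_6} and set $U=\{u_1\}$, $N=\emptyset$.
\begin{itemize}
\item \emph{Forcing.} The only-if part of that proof shows that any defensive alliance avoiding the pendant vertices and containing one $u_s$ must contain every $u_r$, every $z_i$ and all of $C$. A minimal alliance containing $u_1$ is not a singleton of a pendant vertex, so it avoids pendant vertices (otherwise minimality fails). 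Hence $\{u_1\}$ forces the same structure as the full set $\{u_r\mid r\in[2n+4m]\}$ used in \autoref{thm:NP_ExtGMAD}.
\item \emph{Equivalence.} The equivalence with satisfiability of $\phi$ then follows exactly as before.
\item \emph{Main obstacle: a minimal set in the if-direction.} We must produce a \emph{minimal} alliance containing $u_1$. Minimality excludes the members of $T$ and the triples $\{z_i,y_i,\overline y_i\}$ from $L$ as subsets, but it must also exclude any other proper subset. Take any minimal $D'\subseteq A$, with $A$ built from a satisfying assignment. Then $D'\notin L$, because no set of $L$ lies inside $A$. By the only-if argument, $D'$ therefore contains some $u_s$, and so all of $U$. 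So $u_1\in D'$, which suffices.
\end{itemize}

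\emph{Second claim: \W{1}-hardness for \tw.} I would reuse $G'$ from \autoref{lem:twAIMDA_W1_hard}, which already has $\tw(G')\le\tw(G)+5$ and degeneracy~2, with pre-solution $U=\{q\}$ and $N=\emptyset$.
\begin{itemize}
\item \emph{Forward direction.} From an orientation $O$, the set $D_O$ is a defensive alliance containing $q$ that includes no member of $L$. Any minimal $D'\subseteq D_O$ therefore avoids $L$. By the converse argument of the claim there, every minimal alliance outside $L$ contains $q$, so $q\in D'$.
\item \emph{Backward direction.} Let $D$ be a minimal alliance containing $q$. Since $q$ is not pendant, $D$ is not a pendant singleton, so $D\cap P=\emptyset$ by minimality. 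Also $D\ne\{e,e_v,e_u\}$, because $q\in D$. So $D\notin L$, and the orientation $O_D$ built in that proof is a solution of \MinMaxOut.
\item \emph{Delicate point.} The step ``both $e_v,e_u\in D$ is impossible'' is justified there only by $\{e,e_v,e_u\}\in L$. Here it must instead come from minimality: if both lie in $D$ together with $e$, then the proper subset $\{e,e_v,e_u\}$ is itself a defensive alliance, contradicting minimality.
\end{itemize}
Since the parameter is preserved up to an additive constant and the construction runs in polynomial time, \W{1}-hardness of \MinMaxOut~\cite{Sze2008} transfers.
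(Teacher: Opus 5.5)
Your proposal is correct. For the treewidth half you do essentially what the paper does: reuse $G'$ from \autoref{lem:twAIMDA_W1_hard} with $U=\{q\}$ and $N=\emptyset$. Your observation that a minimal alliance containing $q$ is automatically outside $L$, so that the converse argument applies verbatim, is what the paper's one-line proof leaves implicit.

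For the \paraNP half you take a different route. The paper derives both claims from the single reduction with $U=\{q\}$. Since $|U|=1$, this tacitly needs that \MinMaxOut with unary weights is \NP-hard, which the paper never states; it only cites \W{1}-hardness. The graphs from that reduction also have unbounded degree, because of $q$. You instead return to the construction of \autoref{thm:another_def_all_max_deg_6}. There you observe that a single cycle vertex $u_1$ already forces all $u_r$, all $z_i$ and all of $C$ into any minimal alliance containing it. This gives \NP-hardness for $|U|=1$ directly from \textsc{Monotone 3-Sat-(2,2)}. It also sharpens \autoref{thm:NP_ExtGMAD}, whose pre-solution has size $2n+4m$, and yields \paraNP-hardness even on bipartite graphs of maximum degree~6 and degeneracy~2. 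The cost is handling two reductions instead of one.

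One point you should make explicit. The statement ``no set of $L$ lies inside $A$'' holds only for the paper's $A$ \emph{after} its pruning step. That step deletes true literals all of whose clauses are satisfied by three literals, at most one per clause. If you read $A$ off $\alpha$ naively, an entire component from $T$ can be contained in $A$, namely a positive or negative component all of whose variables are set accordingly. A minimal $D'\subseteq A$ could then be exactly that component, which misses $u_1$.
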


	\begin{proof}
		Revisit the proof of \autoref{lem:twAIMDA_W1_hard}.
		We construct the same graph from a \MinMaxOut instance~$(G,r)$ as before. Now, we consider the pre-solution $U=\{q\}$. According to the arguments given in \autoref{lem:twAIMDA_W1_hard}, $U$ is contained in an inclusion minimal defensive alliance \iffl $\phi$ is satisfiable.\qed
	\end{proof}

	Finally, one could ask for algorithmic results for extension problems. So far, only a polynomial-time algorithm for trees has been exhibited. This complements in fact the \paraNP-hardness result of the previously stated result. But we can also see some further algorithmic results for neighborhood diversity.

	\begin{theorem}\label{thm:FPT-nd_ExtGMAD}
		\nd-\ExtGMDefAll is in \FPT.
	\end{theorem}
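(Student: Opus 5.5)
We need to show that \textsc{ExtIMDefAll} is in \FPT when parameterized by $\nd$. The plan is to reuse the ILP machinery built for \autoref{thm:FPT-delay_nd}: rather than enumerating all solutions of $\text{ILP}_A$, we only have to decide whether some solution admits a corresponding set $D$ with $U\subseteq D\subseteq V\setminus N$.

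First we compute the classes $C_1,\ldots,C_{\nd(G)}$ in polynomial time. For every class define $u_i\coloneqq\vert U\cap C_i\vert$ and $f_i\coloneqq\vert C_i\setminus N\vert$; if $U\cap N\neq\emptyset$ we answer \no right away. We then iterate over all $A\subseteq[\nd(G)]$, of which there are at most $2^{\nd(G)}$. We discard $A$ if some $i\notin A$ has $u_i>0$ (a vertex of $U$ would be left out), or if some $i\in A$ has $f_i=0$ (no vertex of $C_i$ can be picked). The degenerate cases are handled directly, exactly as in the enumeration algorithm: $H[A]$ disconnected means $A$ is skipped, and $\vert A\vert=1$ is a clique or independent set that we check by hand.

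For every remaining $A$ we take $\text{ILP}_A$ from \autoref{fig:ilp-formulation} and add the bounds $\max(1,u_i)\le x_i\le f_i$ for each $i\in A$. The correspondence is the following:
\begin{itemize}
\item If $D$ is an inclusion minimal defensive alliance fitting to $A$ with $U\subseteq D\subseteq V\setminus N$, then $\phi_A(D)$ solves $\text{ILP}_A$ by \autoref{lem:nd_ILP_solution}, and it also satisfies the new bounds.
\item Conversely, let $x$ satisfy the augmented program. By \autoref{lem:nd_def-alliances-for-ILP-solution} and the proof of \autoref{cor:nd_enumerate_equiv_def_all}, \emph{every} set corresponding to $x$ is an inclusion minimal defensive alliance, since membership only depends on the counts $x_i$. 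So we may choose, in each $C_i$, all of $U\cap C_i$ plus $x_i-u_i$ further vertices of $C_i\setminus(N\cup U)$. This is possible because $u_i\le x_i\le f_i$.
\end{itemize}
Hence the instance is a \yes-instance if and only if at least one augmented $\text{ILP}_A$ is feasible.

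The number of variables of $\text{ILP}_A$, counting the auxiliary $y_i,u_i,w_i,z_{i,I}$, is bounded by a function of $\nd(G)$, namely $\mathcal{O}(\nd(G)2^{\nd(G)})$. The constraints have polynomially bounded coefficients. Feasibility of an ILP is therefore decidable in \FPT time by Lenstra's algorithm, or alternatively by asking \autoref{pro:ILP_FPT} for a first solution within \FPT-delay.

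The step needing the most care is the converse direction above. We must confirm that, in the proof of \autoref{lem:nd_def-alliances-for-ILP-solution}, minimality of $D^x$ really holds for an arbitrary corresponding set and does not rely on which vertices of a class are chosen. This follows from the fact that all vertices of one class have identical neighbourhoods outside the class, so degrees into $D$ depend only on $x$. A second check is that the singleton case $\vert A\vert=1$ respects $U$ and $N$; that is a direct polynomial check.
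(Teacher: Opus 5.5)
You follow the paper's route: add bounds from the pre-solution to $\text{ILP}_A$ and test feasibility with Lenstra's algorithm. You also encode $N$ through $x_i\le|C_i\setminus N|$, which the paper leaves implicit. The converse direction you single out is unproblematic, since permuting vertices inside a class is an automorphism of $G$.

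\textbf{The gap.} The step that fails is the forward one, where you invoke Lemma~\ref{lem:nd_ILP_solution}. For $\text{ILP}_A$ as written, that lemma is false, and the paper's own proof of this theorem has the same defect.

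Take $G=C_4$ on the cycle $a,b,c,d$. The classes $C_1=\{a,c\}$ and $C_2=\{b,d\}$ are independent, so $a_1=a_2=0$, $N_1=\{2\}$, $N_2=\{1\}$ and $d_1=d_2=2$. The set $\{a,b\}$ is an inclusion minimal defensive alliance: each vertex has one neighbour inside and one outside, and no vertex is pendant.
\begin{itemize}
\item For $x=(1,1)$ we get $\delta_1(A)+1=3$, which forces $y_1=1$. Also $x_1=1$ forces $w_1=1$, hence $u_1=1$. Inequality~\eqref{con:ILP_local} for $i=1$ then demands $2\le y_2=1$.
\item If $x_1\ge 2$, then $\delta_2(A)+1\ge 5>d_2+1$, so $y_2=0$ and \eqref{con:ILP_local} fails for $i=1$. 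The case $x_2\ge 2$ is symmetric.
\end{itemize}
So $\text{ILP}_{\{1,2\}}$ is infeasible. On the instance $U=\{a,b\}$, $N=\emptyset$ (which forces $A=\{1,2\}$), your algorithm answers \no, although $\{a,b\}$ itself is a solution.

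\textbf{Why it happens.} The proof of Lemma~\ref{lem:nd_ILP_solution} only treats the case where the tight neighbour of $v\in C_i\cap D$ lies in $C_i$ itself. The summand $u_i$ is meant to discount $y_i$ when $C_i\cap D=\{v\}$. But $y_i$ occurs in $\sum_{j\in N_i\cap A}y_j$ only if $i\in N_i$, i.e.\ only if $C_i$ is a clique. For an independent or singleton class with $x_i=y_i=1$, one tight neighbouring class already leaves $D\setminus\{v\}$ undefended, yet \eqref{con:ILP_local} asks for two.

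\textbf{Repair.} Replace \eqref{con:ILP_local} by
\[
1+a_iu_i\le\sum_{j\in N_i\cap A}y_j\qquad\forall i\in A.
\]
Both directions then check out:
\begin{itemize}
\item If $a_i=0$, every $j\in N_i\cap A$ differs from $i$, and $C_j\cap D$ is a nonempty set of neighbours of $v$. The constraint is then exactly ``$D\setminus\{v\}$ is no defensive alliance''.
\item If $a_i=1$ and $u_i=1$, the tight neighbour of $v$ must lie outside $C_i$, and $y_i=1$ is counted as well, giving~$2$.
\item The remaining cases are the paper's argument.
\end{itemize}
So both Lemma~\ref{lem:nd_ILP_solution} and Lemma~\ref{lem:nd_def-alliances-for-ILP-solution} hold for the corrected program.

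You should also read \eqref{con:ziI_id1}--\eqref{con:ILP_class} only for nonempty $I\subsetneq A$. Taken literally with $I=A$, \eqref{con:ILP_class} contradicts \eqref{con:def_all}.

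With this corrected program, the rest of your proof goes through: the bounds $\max(1,|U\cap C_i|)\le x_i\le|C_i\setminus N|$, the free choice of representatives, and Lenstra on $\mathcal{O}(\nd(G)2^{\nd(G)})$ variables. Do rename your $u_i=|U\cap C_i|$, which clashes with the ILP variable $u_i$.
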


	\begin{proof}
		Recall how we have shown \autoref{thm:FPT-delay_nd}: we have reduced it to ILP enumeration by constructing an ILP (see \autoref{fig:ilp-formulation}) that has a solution \iffl there exists an inclusion minimal defensive alliance that is consistent with the chosen set of neighborhood equivalence classes. By adding a few inequalities, giving lower bounds on $x_i$ according to the given pre-solution, it is rather straightforward to employ the same ILP in order to decide if the pre-solution can be extended to an inclusion minimal defensive alliance. As the number of variables of the ILP is upper-bounded by a function of the neighborhood diversity of the given graph, the claim follows with~\cite{Len83}.
		\qed
	\end{proof}

	With this result, one could also avoid using ILP enumeration in order to deduce \autoref{thm:FPT-delay_nd}.

	Finally, it would be very interesting to know if there is any form of \XP enumeration algorithm possible with the parameter treewidth or also pathwidth. This would then be the first problem for which we know of a  parameterized enumeration algorithm of \XP-type but can prove that under standard complexity assumption, no ouput\FPT enumeration algorithm could exist.

\end{toappendix}

\bibliographystyle{splncs04}
\bibliography{ab,hen}

\end{document}